\DeclareMathOperator{\Paths}{\mathcal{P}}
\newcommand{\bw}{\text{bw}}
\newcommand{\capa}{b}
\DeclareMathOperator{\dem}{d}
\newcommand{\card}[1]{\ensuremath{\left|#1\right|}}
\DeclareMathOperator{\lb}{\textit{lb}}
\newcommand{\Demands}{\mathcal{D}}
\DeclarePairedDelimiter\floor{\lfloor}{\rfloor}
\newtheorem{rem}{Remark}
\newtheorem{theorem}{Theorem}
\newcommand{\rev}[1]{{\color{black}#1}}
\newcommand{\seb}[1]{{\color{black}#1}}
\newcommand{\jeremie}[1]{{\color{black}#1}}
\algnewcommand{\algorithmicforeach}{\textbf{for each}}
\def\BibTeX{{\rm B\kern-.05em{\sc i\kern-.025em b}\kern-.08em
    T\kern-.1667em\lower.7ex\hbox{E}\kern-.125emX}}
\begin{document}
\begin{frontmatter}
\title{Joint Routing and Scheduling for Large-Scale Deterministic IP Networks}

\author[label1]{Jonatan Krolikowski}
\author[label1]{S\'ebastien Martin}
\author[label1]{Paolo Medagliani}
\author[label1]{J\'er\'emie Leguay}
\author[label2]{Shuang Chen}
\author[label2]{Xiaodong Chang}
\author[label2]{Xuesong Geng}
\address[label1]{Paris Research Center, 20 Quai du Point du Jour, 92100 Boulogne-Billancourt, France}
%				e-mail addresses: \{name.surname\}@huawei.com
%	}
\address[label2]{Beijing Research Center, Huawei Technologies Co. Ltd}
	%\IEEEauthorblockA{\IEEEauthorrefmark{2} Huawei Technologies Co. Ltd., Beijing, China\\
	%			e-mail address: chenshuang23@huawei.com
	%}
%}

\begin{abstract}
With the advent of 5G and the evolution of Internet protocols, industrial applications are moving from vertical solutions to general purpose IP-based infrastructures that need to meet deterministic Quality of Service (QoS) requirements. The IETF \emph{DetNet} working group aims at providing an answer to this need with support for (i) deterministic worst-case latency and jitter, and (ii) zero packet loss for time-sensitive traffic. 

In this paper we focus on the joint routing and scheduling problem in large scale deterministic networks using Cycle Specified Queuing and Forwarding (CSQF), an extension of Cyclic  Queuing and Forwarding (CQF) with multiple transmission queues and support of segment routing.
%Despite we mainly target the Cycle Specified Queuing and Forwarding (CSQF) standard, our approach equally applies to the Cyclic Queuing and Forwarding protocol with 2 or 3 transmission queues. 
In this context, we present two centralized algorithms to maximize traffic acceptance for network planning and online flow admission.
We propose an effective solution based on column generation and dynamic programming. Thanks to the reinforcement of the model with valid inequalities, we improve the upper bound and the solution. We demonstrate on realistic instances that we reach an optimality gap smaller than 10\% in a few seconds. Finally, we also derive an ultra-fast adaptive greedy algorithm to solve the problem at the cost of a small extra gap.
%when compared to the advanced offline solution.
\end{abstract}

\begin{keyword}
Deterministic Networking, Routing, Scheduling.
\end{keyword}

\end{frontmatter}

\section{Introduction}

%Motivation for DetNet (determinism is important, traditional technologies are not enough)
The $5^{th}$ generation of networks is paving the road for 
%a large variety of 
latency-sensitive network services 
to enable a wide-range of applications like factory automation, connected vehicles and smart grids~\cite{rfc8578}.
Traditional Internet Protocol (IP) services allow delivering packets with no loss and no ordering issues. However, they cannot provide strict Quality of Service (QoS) guarantees. Certain service classes can be given preferential treatment but performance is still statistical. Deterministic performances are now a must to support applications with low and worst-case latency requirements\rev{, such as audio and video bridging, industrial automation (smart factory), smart grid, and remote control for telemedicine or automotive}.

%Technological context
A collection of IEEE 802.1 Ethernet standards, known as \emph{Time-Sensitive Networking (TSN)}~\cite{TSNSurvey19}, has been developed in the past decade to support professional applications over Local Area Networks (LAN) with mechanisms such as priority queuing, preemption, traffic shaping and  time-based opening of gates at output ports. While these mechanisms are well suited for static traffic requirements and small networks, they are not enough to support large-scale IP networks. The IETF DetNet (Deterministic Networking)~\cite{rfc8578} working group is taking a step
further by defining Segment Routing (SR) mechanisms so that Layer~3 can dynamically exploit Layer~2 functionalities for queuing and scheduling 
%further in integrating layer 2 queuing with layer 3 forwarding 
to support (i) deterministic worst-case latency and jitter, and (ii) zero packet loss for time-sensitive traffic. In particular, 
%It also ensures that time-sensitive flows can coexist on an IP network with best-effort traffic and it supports high reliability with the provisioning of redundant paths. 
the working group is currently specifying Cycle Specified Queuing and Forwarding (CSQF)~\cite{chen-detnet-sr-based-bounded-latency-00},
a promising extension to Cyclic Queuing and Forwarding (CQF, a.k.a. IEEE 802.1Qch) with more than $2$ transmission queues in order to relax tight time-synchronization constraints and to schedule, in a more flexible way, transmissions at each hop.

%Our problem
In TSN layer-2 networks, several works have optimized the opening and closing of gates at output ports (IEEE 802.1Qbv, \cite{Pop2016,Craciunas2016,Mahfouzi2018}) to meet low latency requirements.
%for a set of time-triggered traffic flows. 
However, these solutions suffer from two main limitations: 
%transmissions at each hop must be completed within a single cycle and 
1) the overall gate schedule has to be modified at network-level every time the traffic characteristics evolve and 2) no queues can be used to dynamically delay packets at nodes. Alternatively, CSQF proposes a scalable solution where transmission cycles at each port repeat periodically thanks to the round-robin opening of multiple queues  dynamically selected by IP packets using 
%a cyclic scheduling of data transmissions is continuously applied to each port to guarantee a worst-case latency.
%Thanks to the use of multiple queues and 
segment routing identifiers (SIDs), 
%assigned to them, 
a label stack that determines scheduling and routing at each hop. A network controller decides the proper label stack for each flow by solving a joint scheduling and routing problem. In this context, we propose two centralized control plane algorithms to maximize traffic acceptance both in the \emph{offline} (i.e., global optimization) and the \emph{online} (i.e., fast demand acceptance) scenarios. \rev{We point out that the offline algorithm can be used either to dimension the network, i.e., configuring network parameters, or to route batches of demands into the network, while the online algorithm can be used to quickly accept new demands as soon as they arrive into the network considering the same network parameters decided by the offline planning.} \jeremie{For the online algorithm, the inputs to the problem are discovered in sequence and a decision must be taken at demand arrival, while for the offline algorithm to dimension network resources all inputs are known when solving the problem.}
Up to our knowledge, it is the first paper to formulate the joint routing and scheduling problem for DetNet and provide efficient algorithms in large-scale deterministic networks.

%Our contribution
%We present two centralized algorithms to maximize traffic acceptance for network planning and online flow admission.
We formulate the Deterministic Networking (DN) planning problem to maximize the acceptance of time-triggered traffic and we analyze its NP-hardness.
Then, we present an effective solution based on column generation and dynamic programming to solve a relaxed version of the problem which we round afterwards. Furthermore, thanks to the reinforcement of the problem model with valid inequalities, we show that we can drastically improve the upper bound (by up to 30\%) to better estimate the optimality gap and enhance the final solution (by up to 5\%) on large instances.
We demonstrate on realistic instances with hundreds of nodes and links that we can reach a gap smaller than 10\% in a few seconds. 
Finally, as an alternative, we derive an ultra-fast adaptive greedy algorithm (10 $\mu s$ per demand) at the cost of an extra 5\% gap when compared to the advanced solution based on column generation. This algorithm can be used for the quick acceptance of new demands in an online fashion.

%||||||| .r29236
%We formulate the planning problem as an extension of a multi-commodity flow problem and show that it is NP-hard.
%We propose an effective solution based on column generation and dynamic programming. Thanks to the reinforcement of the model with valid inequalities, we improve the upper bound and the solution. We demonstrate on realistic instances that we reach an optimality gap smaller than 10\% in a few seconds. We also derive an ultra-fast adaptive greedy algorithm (200 $\mu s$ per demand) to solve the offline and online problems at the cost of an extra 5\% gap when compared to the advanced offline solution based on column generation.
%=======
%We formulate the planning problem as an extension of a multi-commodity flow problem and show that it is NP-hard.
%We propose an effective solution based on column generation and dynamic programming. Thanks to the reinforcement of the model with valid inequalities, we improve the upper bound and the solution. We demonstrate on realistic instances that we reach an optimality gap smaller than 10\% in a few seconds. We also derive an ultra-fast adaptive greedy algorithm (10 $\mu s$ per demand) to solve the offline and online problems at the cost of an extra 5\% gap when compared to the advanced offline solution based on column generation.

%organization
More details about CSQF are given in Sec.~\ref{model}.
Relevant related works are discussed in Sec.~\ref{related}. The DN problem is formulated and analyzed in Sec.~\ref{problem}. Sec.~\ref{solution} derives the column-generation algorithm and Sec.~\ref{online} presents the adaptive greedy solution. Numerical assessments are shown in Sec.~\ref{results}. Sec.~\ref{conclusion} concludes this paper.

\section{DetNet System Model with CSQF}
\label{model}

%In order to guarantee end-to-end deterministic performance  over general-purpose IP/Ethernet networks, the IETF DetNet WG is currently defining mechanisms so that layer 3 can exploit layer-2 functionalities for queuing and scheduling. 
%%a set of guidelines to let a centralized controller reserve resources inside DetNet-enabled routers and control congestion for time-critical applications. By exposing some of layer-2 functionalities into layer 3, 
%The goal is to integrate routing and scheduling to support (i) deterministic worst-case bounds on latency and packet delay variation (jitter), and (ii) extremely low/zero packet loss.
%s, regardless of the number of flows and overall load in the network.

%According to the IETF~\cite{rfc8578}, DetNet is affirming as a candidate for supporting Ultra Reliable and Low Latency Communication (URLLC) scenarios for 5G networks, mainly to replace existing legacy industrial networks with IP-based infrastructures and facilitate the deployment of new vertical solutions. For this reason, the DetNet standard targets either end-to-end deterministic communications at the scale of an Internet Service Provider or the interconnection of local TSN-enabled layer-2 networks. 

A promising standard draft from the IETF DetNet group is the Cycle Specified Queuing and Forwarding (CSQF) mechanism~\cite{chen-detnet-sr-based-bounded-latency-00}. %allocated for the transmission of time-sensitive flows. 
It is an evolution of the Cyclic Queuing and Forwarding (CQF)~\cite{7961303}, also referred to as peristaltic shaper, which considered 2 queues on ports, open and closed alternatively in a cyclic fashion. At any given time, one queue is for transmission while the other one is for reception. CQF works well for small networks as it assumes perfect synchronization between nodes and as the delay of packets cannot be dynamically controlled. A packet sent from a node in a cycle $c$ must be received during the same cycle and retransmitted at cycle $c+1$. To improve scalability and flexibility, DetNet CSQF adds the possibility of using more queues 
%a third queue that accounts for packets that are received late in cycle $c+1$ due 
for loose synchronization between nodes and advanced scheduling~\cite {finn-detnet-bounded-latency-04,qiang-detnet-large-scale-detnet-04}. \rev{Differently from CQF which is a layer 2 protocol, CSQF operates at layer 3 as it allows the routing and the scheduling of packets using Segment Routing (SR).}
%as packets can arrive in a later cycle but the worst-case forwarding latency is known by the controller deciding routing and scheduling.
%in contrast, does not require strict synchronization on the cycle shift between nodes and it 
\subsection{Segment Routing for packet forwarding}
\rev{While in CQF a packet can only be forwarded at the next transmission cycle that follows the reception one, CSQF allows a flexible transmission scheduling by using a SR label stack to explicitly state for each intermediate node on which port (\emph{routing}) and in which cycle, i.e., which queue (\emph{scheduling})}, each packet should be transmitted after being received and processed. Precise knowledge of the position of a packet inside the network at a generic instant $t$ comes from the fact that, at each node, the worst case forwarding latency is known. Each time a packet arrives at a node, the scheduling of its future transmission is realized by its assignment to one of the inactive queues. \rev{As shown in Figure~\ref{csqfSR}, the SR Id (SID), i.e., the label, contained in the header of the packet allows at each SR-enabled node to first determine the output port for the packet. A range of SIDs is assigned to each port in order to define the outgoing queue for each packet. Once the port is chosen, the SID is used to map the corresponding outgoing cycle. Second, as the cycles are statically mapped into transmission queues, the SID is used to determine the outgoing queue in which the packet will be transmitted. The role of the centralized controller is to define, for each flow, on which port and in which queue each packet will be inserted, in order to avoid congestion on queues. As the number of outgoing cycles depends on the life cycle of each flow, a simplification consists on considering a finite number of cycles on which the traffic can be mapped. 

Like in TSN, DetNet traffic with CSQF is time-triggered (TT) and follows a specific pattern that repeats over time. This period is referred to as \emph{hypercycle}.  For each cycle, the application specifies how much data will be sent. To ensure deterministic end-to-end performance, it is necessary to provide a scheduling and routing decision at each hop and guarantee that enough capacity is available.}
\begin{figure}[!t]
	\centering
	\includegraphics[width=0.9\columnwidth]{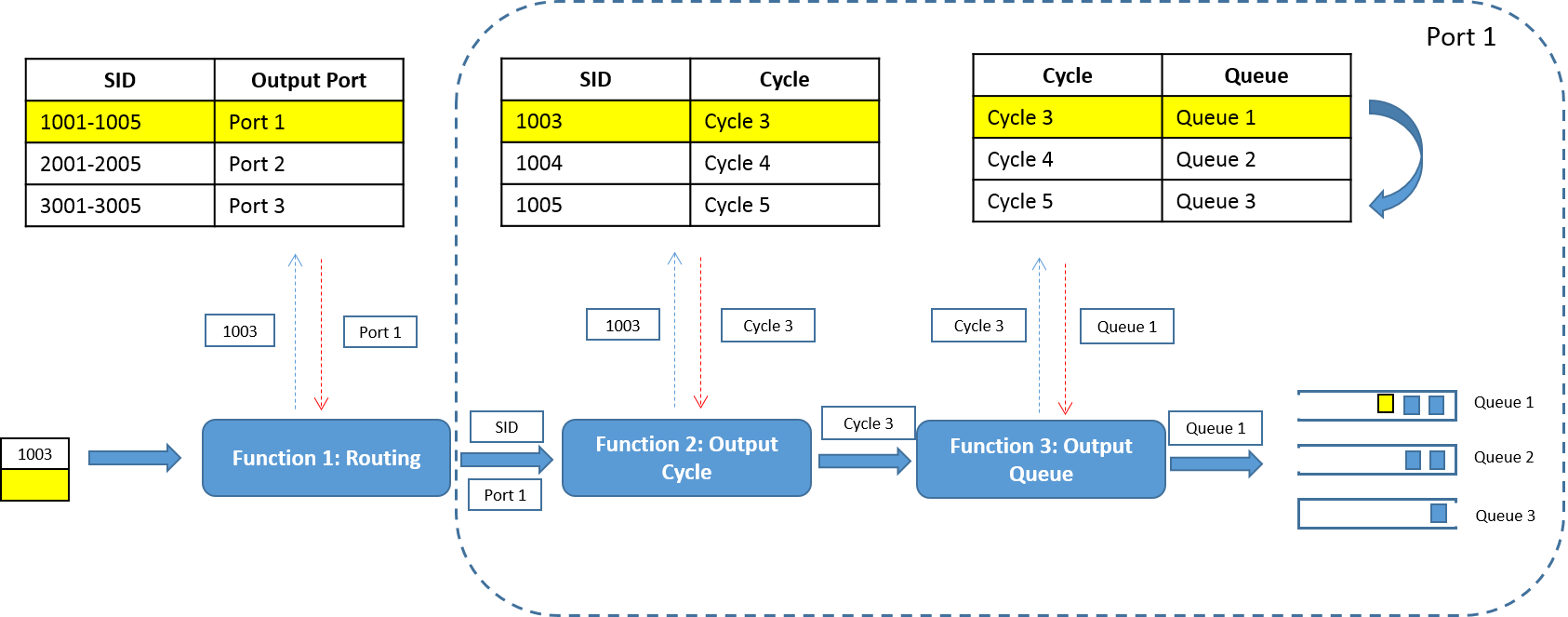}
	\caption{CSQF packet forwarding based on SR headers.
	}
	\label{csqfSR}
\end{figure}
According to CSQF, a DetNet-enabled device decides how and when a packet is forwarded by consuming the first Segment Routing ID (SID) available in the label stack of packet headers. As a first step, the receiving node maps the SID into the corresponding output port. As a second step, the device uses the same label to select the queue associated with the intended transmission cycle. \jeremie{MPLS or IPv6 can be used. In the first case, labels are encoded over 32 bits, in the former case they are basically IPv6 addresses (e.g., 128 bits when no compression is used).} The SR label stack can be provided by a centralized network controller that (a) computes a feasible path from source node to destination node, (b) computes the right scheduling within each node traversed by the flow, and (c) distributes the corresponding SR label stack to all the network elements via specific protocols (e.g., PCEP). 

% (considered at the time a packet is available for transmission).
%For this reason, the TQ in CSQF can be used for explicit scheduling.

%Before installing a flow in the network, a network controller must (a) compute a feasible path from source node to destination node, (b) compute the right scheduling within each node traversed by the flow, and (c) distribute the corresponding SR label stack to all the network elements. When a source node needs to send a DetNet flow, it will add to the header of the packet the label stack received by the controller in order to instruct the intermediary DetNet-enabled devices on how to forward the packet. 
\subsection{Deterministic forwarding in DetNet devices}
\rev{The delay introduced by a node to forward a packet can be split into 4 terms: (i) the propagation delay, (ii) the processing delay, (iii) the transmission delay, and (iv)   the queuing delay.  The propagation delay is given by the physical distance between two entities, the processing delay is the time required to receive the packet and sent it to the upper layers of the ISO/OSI stack for routing and scheduling decision, the transmission delay to put the packet on the physical link. While the propagation and the forwarding delay can be assumed as constant, the processing delay can vary due to different reasons. In order to provide deterministic forwarding latency, it is possible to measure the worst-case processing latency and use the queuing delay to compensate the processing delay, ensuring that the sum of the two terms is equal to a constant and known value---If the processing delay is large, the time spent inside a queue will be small and the packet will be scheduled for quick transmission, while if the processing delay is small, the time spent in the queue will be large and the packet will wait a for longer before being sent. Using this method, the processing delay, which is a stochastic process, is bounded by a constant value and a deterministic forwarding delay can be provided by the CSQF-enabled device. }

Inside a DetNet-enabled device, each port is equipped with $N$ queues (typically $8$), normally used for DiffServ and Best Effort (BE) traffic. In CSQF, the standard defines that out of the $N$ queues, $N_{\rm DN}$ queues (by default $3$) are reserved for time-sensitive traffic. These queues are served in a round-robin fashion such that the active queue is \emph{open} for transmission and \emph{closed} for reception. Conversely, the $N_{\rm DN}-1$ inactive queues can only accept packets for future transmission\rev{, i.e., a packet can be delayed by at most $N_{\rm DN}-1$ cycles, according to the queue in which the packet is inserted.} For this reason, the assignment of packets to specific inactive queues defines their transmission schedule and needs to be carefully controlled. 
%in order to keep into account queue activity. 
Each time-sensitive queue is drained after the activity period and is dimensioned to receive all the packets scheduled within a cycle without introducing any packet loss.
%and the remaining $8-N$ queues are \sout{used for DiffServ over BE traffic} \textbf{served with DiffServ or BE traffic}.  
And in order to support BE traffic, a percentage (e.g., 50\%) of the cycle duration is allocated to DetNet traffic while the remaining is for BE traffic. 
%In the rest of the paper, we consider by default $N_{DN}=3$. 
Due to the periodic activation of queues, the time at each node is logically divided into cycles. In order to guarantee deterministic latency, the duration of all cycles is the same throughout the network. The starting time of the cycles at the different nodes is not synchronized and can present an offset which is measured and known by the controller. 

\subsection{Deterministic packet forwaring: a networking view }

\rev{As in CSQF the forwarding delay is known, as well as the offset between nodes and the activation time of each queue, the controller can decide for the routing and scheduling of each flow in the network, ensuring that no collision or congestion can happen in the network. This is equivalent to deciding, for each packet, when and where it will be transmitted as well as its scheduling, i.e., if a packet is sent in the first available slot or delayed by one or more additional cycles before transmission. }

%The duration of each cycle is assumed to be the same for all the nodes in the network. 

\begin{figure}[!t]
	\centering
	\includegraphics[width=0.9\columnwidth]{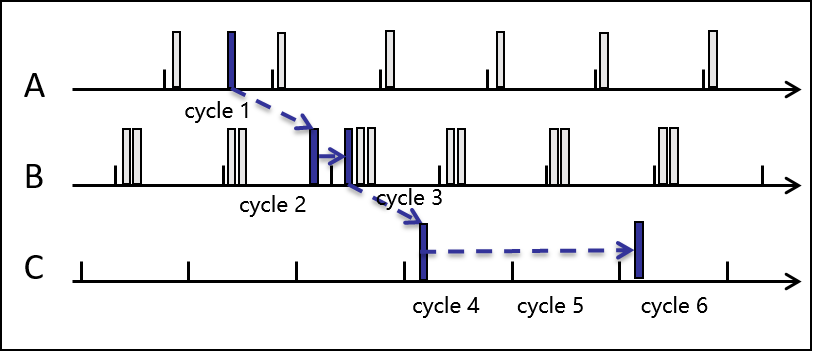}
	\caption{CSQF packet forwarding. Between nodes A and B, and, B and C, the packet is transmitted in the next cycle, while node C decides to schedule packet transmission two cycles later.
	}
	\label{csqfForwarding}
\end{figure}

In Figure~\ref{csqfForwarding}, we show an example of how a packet is propagated from node A to node C through node B. Once the packet is sent from A, it is received at B within a cycle (cycle 2 in the figure). As node B decides for immediate packet forwarding, the packet is transmitted in the next cycle. Finally, node C decides for the scheduling of the packet two cycles later, so that the packet will be transmitted at cycle 6. As the same considerations apply if we consider 0 offset between cycles of different nodes, for the sake of simplicity and without loss of generality, we will consider throughout this paper a 0 time offset such that all cycles are aligned at the different nodes.

\section{Related work}
\label{related}

%Flow shop scheduling problems, are a class of scheduling problems with a workshop in which the flow control shall enable an appropriate sequencing for each job and for processing on a set of machines or with other resources 1,2,...,m in compliance with given processing orders. Especially the maintaining of a continuous flow of processing tasks is desired with a minimum of idle time and a minimum of waiting time. Flow shop scheduling is a special case of job shop scheduling where there is strict order of all operations to be performed on all jobs. Flow shop scheduling may apply as well to production facilities as to computing designs.

In the literature, most of the papers are focusing on the scheduling of TSN/IEEE 802.1Qbv gate openings and closings to satisfy a certain traffic matrix. The matrix is composed of TT traffic flows which generate packets at known and repeating time instants. Routing information is generally given by the spanning tree protocol operating at layer 2. In this context, the goal is to find a feasible scheduling while minimizing the number of queues. In this case, a variant of the flow shop scheduling problem must be solved.  

For 802.1Qbv, \cite{Pop2016} introduces the problem as an Integer Linear Program (ILP) while~\cite{Craciunas2016} uses OMT (Optimization Modulo Theory) to formulate a Satisfiability Problem (SAT). \cite{Mahfouzi2018} also presents a SAT problem but considering robustness to control worst-case performance in case of uncertain traffic inputs. These papers do not introduce practical and efficient heuristics. The resolution of ILP or SAT models with solvers can only be achieved on very small instances.

%DTU Denmark
%flow-shop scheduling in 802.1Qvb, scheduling TT traffic, ILP formulation, no heuristic
%ILP for TT~\cite{Pop2016}
%Extension, joint routing and scheduling for AVB traffic, iterative heuristic with local search
%Extension for AVB~\cite{Pop2018}

%802.1Qvb, scheduling TT traffic, SAT/SMT problems, no heuristic

%802.1Qvb, scheduling TT traffic, SAT/SMT problems, no heuristic
%SMT for TT to minimize the number of queues, robust formulation to control worst-case performance~\cite{Mahfouzi2018}
%802.1Qvb, scheduling AND ROUTING for TT traffic, SAT/SMT problems, no heuristic
%SMT for routing and scheduling (very slow...).~\cite{Smirnov2017}

%802.1Qvb, ONLINE routing and scheduling TT traffic, ILP and simple greedy heuristic
In case routing can also be decided, \cite{Nayak2018} presents an online heuristic for 802.1Qbv. In this case, the end-to-end transmission of a cyclic TT flow must be realized in the same global transmission cycle to minimize the end-to-end latency. In other papers from the same authors, an ILP model is formulated to maximize traffic acceptance for a set of flows~\cite{Nayak2016}. \cite{Falk2018} formulates a similar problem by considering constant time shifts between incoming and outgoing transmissions at intermediate nodes (no controllable queuing is allowed). \cite{Smirnov2017} presents a SAT problem formulation of the same problem.
%a centralized online routing and scheduling for TT (minimization of the number of links)~\cite{Nayak2018}
%From same authors: CENTRALIZED routing and scheduling TT traffic, only ILP, no heuristic
%Centralized offline routing and scheduling - TSSDN: maximize demand acceptance while minimizing hop counts (=cost, to avoid loops).~\cite{Nayak2016, nayak2018routing}
\rev{\cite{schweissguth2017ilp} proposes a compact ILP formulation of the joint routing and scheduling problem with the objective of minimizing the average latency. No scalable resolution algorithms are provided.}

%802.1Qvb, routing and scheduling for TT traffic, ILP only, no heuristics

Instead, our work focuses on both deterministic latency and jitter requirements rather than minimum latency. Our solution uses the recent CSQF standard proposal to guarantee worst-case performance at each hop thanks to the use of cyclic transmissions and segment routing for dynamic scheduling. We formulate the joint routing and scheduling problem for DetNet to maximize traffic acceptance as an ILP. We analyze the hardness of the problem and solve it at large scale and with quantifiable optimality.

\section{Problem formulation and Complexity}
\label{problem}

This section introduces our model for the routing and scheduling of traffic in DetNet with CSQF. We formulate its Integer Linear Problem (ILP) and analyze the complexity. \rev{The notation used in the following is summarized in Table~\ref{table:abbrev}.}

\begin{table}[t]
\caption{Overview of notation}\label{table:abbrev}

\begin{centering}
\begin{tabular}{ll}
\hline
Symbol                                          & Definition                                                                                       \\ \hline
 $C$                      & \#cycles per hypercycle                                                                          \\
$G=(V,A)$                & G topology of: V network nodes, A: directed links                                                \\
$\Delta_{a}$ / $\capa_a$ & delay / per-cycle capacity of link a                                                             \\
 $\Demands$               & set of demands                                                                                   \\
$s^d$, $t^d$         & source and destination of demand $d$                                                             \\
$\Paths$ / $\Paths^{d}$  & set of s-paths (all / belonging to demand $d$)                                                   \\
$d(p)$                   & demand that s-path $p$ belongs to                                                                \\
$\bw^d_c$ / $\bw^d$      & packets ($\mathit{du}$s) emitted   at $s^d$   (in cycle $c$ /  one hypercycle) by $d$                     \\
$\bw^{\dem(p)}_{a,p}(c)$ & capacity (in \textit{du}s) needed on $a$ for  $d(p)$ during    $c$ if $p$ is chosen \\
 $\Delta^d$               & delay constraint for $d$                                                                         \\
$r_k^p$                  & cycle shifts at $k$-th node of path $p$                                                          \\
$R$                     & max cycle shifts at all nodes                                                                    \\
$y_{p}$                  & decision variable: 1 if $p$ is chosen, 0 otherwise                                               \\ \hline
\end{tabular}
\end{centering}
\end{table}

\subsection{\rev{Cycles, Topology, and Demands}}
\rev{
Thanks to CSQF, time is partitioned into \emph{cycles} of equal duration, e.g.\ 10 $\mu$s. Blocks of consecutive cycles form hypercycles of size $C$, e.g.\ comprising 12 cycles each. $C$ is chosen such that the all network behaviour is the same in each hypercycle as will be argued below. Without loss of generality, we assume that the cycles start at the same time across the network and the hypercycle length $C$ is the same on every port / link.  

Let us consider a network $G=(V,A)$. The nodes $v\in V$ represent DetNet-enabled routers or switches. The nodes are connected with data links represented by the (directed) link set $A\subseteq V\times V$. Each arc $a=(u,v)\in A$ induces a delay of $\Delta_{a}$ cycles which comprises its propagation delay as well as the processing and queuing delay at node $v$. Furthermore, each arc $a$ has a per-cycle capacity $\capa_a$ (in data units \textit{du}, fixed size in Bytes).

A given set of demands $\Demands$, i.e., a set of TT flows, needs to be routed through the network. Demand $d$ is defined by  
\begin{compactitem}
\item a source node $s^d\in V$ and a destination $t^d\in V$,
\item a deterministic pattern of packet arrivals that repeats in every hypercycle. In cycle $c$, the source node $s^d\in V$ of demand $d$ emits packets for a  total of $\bw^d_c\in\mathbb{Z}_{+}$ (in data units \textit{du}). 
Note that due to the repetition.
$\bw^d_c = \bw^d_{c\%C}$ for any $c\in\mathbb{Z}_+$. 
\item a maximum acceptable end-to-end delay (in cycles) denoted by $\Delta^d$.
\end{compactitem}
}

\subsection{Scheduled Paths}
For a demand $d$ to be \emph{accepted}, the central controller  needs to assign a unique  feasible \emph{scheduled path} (\emph{s-path}).
An s-path $p$ is a path in $G$, i.e.\ a sequence of arcs $(a_1, \ldots, a_{\card{p}})$ where arcs $a_k=(u_k,v_k)$ are such that $u_1=s^d$, $v_{\card{p}}=t^d$ and $v_k=u_{k+1}$ for $k=1,\ldots, \card{p}-1$, together with an integer sequence $(r_1^p, \ldots, r_{\card{p}-1}^p)$ where $r_k^p\in\mathbb{Z}_{\geq 0}$  indicates the number of cycle shifts at corresponding nodes $v_k$. A shift is an explicit additional delay (expressed in multiple of cycles) that is introduced at nodes to schedule data transmissions into a specific CSQF queue. \rev{While the modeling for the routing and scheduling problem is based on cycle shifts, the path is finally encoded with transmission queues at each hop (i.e., for each outgoing port / link a SR label is derived from the chosen cycle shift).}
If $c$ is the earliest possible cycle in which a packet may be forwarded from $v_k$ (recall that processing and queuing delays are included in the arc delay of the preceding arc), a cycle shift of $r_k$ means that  transmission is carried out in cycle  $(c+r_k)\%C$.
The maximum number of shifts at a node is $R=N_{\rm DN}-2$ where $N_{\rm DN}$ is the number of CSQF queues reserved to deterministic traffic. %A shift of $r$ cycles shifts induce the emission in cycle $(c+r)\%C$. 
The introduction of cycle shifts allows to accept more traffic, as we will see later. 

\begin{figure}
\includegraphics[width=\columnwidth]{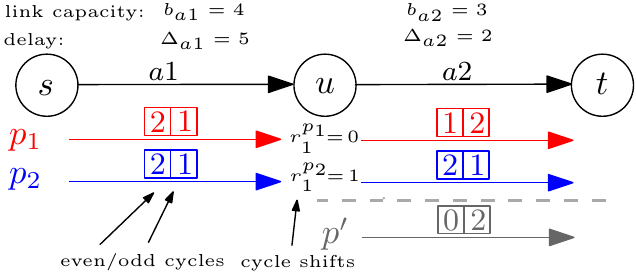}
\caption{A demand $d$ between source $s^d$ and destination $t^d$ with two \emph{s-paths} $p_1$ and $p_2$ and hypercycle length $C=2$, together with a second demand $d^{\prime}$ with only one path $p^{\prime}$.}
\label{fig:s-paths}
\end{figure} 
 
Let us consider the example in Figure~\ref{fig:s-paths} with a single demand $d$ where a path from $s$ to $t$  has two hops with an intermediate node $u$.  In all even cycles, $s$ sends $2$ \textit{du}, and in all odd cycles, it sends $1$ \textit{du}. The  pattern repeats after two cycles ($C=2$). We consider two possible s-paths denoted $p_1$ and $p_2$ that may be used for $d$. From $u$, the earliest  transmission of the 2 \textit{du}  is in the odd cycles ($c=5\%2=1$), while the one of the $1$ \textit{du} is in the even cycles (i.e., at $c=(1+5)\%2=0$), as the  delay is $\Delta_{a1}=5$ cycles. The s-path $p_1$ does not introduce any additional shift (recall that the one induced by the delay on $a_1$ is mandatory) at node $u$ ($r_1^{p_1}=0$).
 The s-path  $p_2$ has a cycle shift of $r_1^{p_2}=1$ at $u$, so it forwards $2$ \textit{du} in even cycles and $1$ \textit{du} in odd cycles. Now, to extend the example and show the need to introduce scheduling (i.e., extra cycle shifts) at intermediary nodes, let us consider a second demand $d^{\prime}$ from $u$ to $t$ that has only one available s-path $p^{\prime}$ due to delay constraints. This s-path uses arc $a_2$ such that in even cycles,  $d^{\prime}$ requires 0 \textit{du} on $a_2$ and 2 \textit{du} in odd cycles. S-paths $p_1$ and $p^{\prime}$ together thus require 4 \textit{du} in odd cycles, prohibitive with a per-cycle capacity $\capa_{a_2} = 3$. However, the additional cycle shift in $p_2$ for demand $d$ allows both $d$ and $d^{\prime}$ to be routed via $a_2$.

A single s-path is \emph{feasible} for demand $d$ if the following two conditions hold.\\
\textbf{1) End-to-end delay: }
The s-path delay $\Delta(p)$ must not exceed the maximum end-to-end delay $\Delta^d$. 
%($\Delta(p)\leq \Delta^d$ holds). 
$\Delta(p)$ has two  aspects: (i) the sum of arc delays $\Delta_{a}$ and (ii) the sum of cycle shifts $r_k^p$ at the intermediate nodes. 
In the example in Figure~\ref{fig:s-paths}, $\Delta(p_1)= 7$, and $\Delta(p_2)= 8$, as indicated by the arc delays of 5 and 2, respectively. The difference comes from the shift at $u$ on $p_2$. 

We denote as $\Delta_{u_k}(p)$ the shift (in cycles) for the data to be transmitted at intermediate node $u_k$. It is  easily calculated as 
\begin{align}
\label{eq:shift}
\Delta_{u_k}(p) = \sum_{i=1}^{k-1} (\Delta_{a_{i}} + r^p_i)
\end{align}
where $\Delta_{u_1}(p)=0$ since there is no delay at $u_1=s^d$.
%while the delay to reach node $v_k$ is $\Delta^{\text{in}}_{p}(v_k)=\Delta^{\text{out}}_{p}(u_{k}) + \Delta_{a_{k}}$. 
%The delay within each node derives from the scheduling at the node. 
The total delay of the s-path is $\Delta(p)=\Delta_{u_{\card{p}}}(p) + \Delta_{a_{\card{p}}}$.\\
%\begin{align*}
%\Delta(p)=\Delta_{p}(u_{\card{p}}) + \Delta_{a_{\card{p}}}.
%\end{align*}
\textbf{2) Arc-cycle capacity:} 
%A demand $d$ consumes a certain capacity on each used arc during the different cycles. For instance, on the first arc on the s-path $p$ used for $d$, the required capacity is $\bw^d_c$ during cycle $c$ since this is the bandwidth emitted by the source. When the interior node $u_k$ forwards the data, the pattern is shifted according to the delay $\Delta^{\text{out}}_{p}(u_k)$: In cycle $c$, the demand requires a bandwidth of $\bw^d_{(c+\Delta^{\text{out}}_{p}(u_k))\% C}$. Note that due to the cyclic structure of the problem, it suffices to only calculate the bandwidth  in the cycles  $0$ to $C-1$.  For simplicity, the required bandwidth for demand $d$ during cycle $c$ on arc $a=(u,v)$ within s-path $p$ is denoted by
A demand $d$ consumes a certain capacity $\bw^d_{a,p}(c)$ on  arcs $a$ of the s-path $p$ at cycle $c$. This value is determined by following the cyclic shifts along $p$: on the first arc of  $p$, the required capacity during cycle $c$ is $\bw^d_c$, the bandwidth emitted by the source $s^d$. 
As seen above, the delay at intermediate node $u$ is $\Delta_{u}(p)$. Any packet emitted from $s^d$ in cycle $c$ is thus forwarded from $u$ in cycle $(c+\Delta_{u}(p))\% C$.
%Note that due to the cyclic structure of the problem, it suffices to calculate the bandwidth  in the cycles  $0$ to $C-1$. 
The required bandwidth for demand $d$ during cycle $c$ on arc $a=(u,v)$ within s-path $p$ is therefore given by
\begin{align*}
\bw^d_{a, p}(c) = \bw^d_{(c+\Delta_{u}(p))\%C}.
\end{align*}
If enough capacity is available on every arc and during every cycle, $p$ can be assigned to $d$.

 %, i.e.\ the total bandwidth transmitted from $u$ to $v$ on $a$ may not surpass $\capa_a$. 

%In Fig.\ \ref{fig:s-paths}, since the time to pass $a_1$ is odd ($t_{a_1}=5$), the number of incoming data units at $u$ is shifted by one cycle in comparison to $s^d$. For $p_1$, there is no shift at $u$, which means that $\Delta^{\text{out}}_{p_1}(u)=5$ and thus $\bw^d_{a_2, p_1}(0)=\bw^d_{1}$ and 
%$\bw^d_{a_2, p_1}(1)=\bw^d_{0}$. The shift in $p_2$ induces $\bw^d_{a_2, p_2}(0)=\bw^d_{0}$ and 
%$\bw^d_{a_2, p_1}(1)=\bw^d_{1}$. The residual capacities per cycle after accepting $p_1$ and $p_2$, respectively, are indicated above the arcs. For example, $p_1$ uses $1$du on $a_2$ in cycle $0$, leaving 4 of the 5  du available.
  
%If there is sufficient capacity for s-path $p$ in the empty network \textbf{JL: in the empty network?}, and if the delay condition $\Delta(p)\leq \Delta^d$ holds, $p$ is feasible. 

\rev{For ease of notation, each s-path $p$ is associated with a unique demand $\dem(p)\in\Demands$. Otherwise, two demands with identical sources and destinations would have identical s-paths variables. 
The set of feasible s-paths for a demand $d$ is denoted by $\Paths^{d}$.
$\Paths = \bigcup_{d} \Paths^{d}$ is the disjoint union of all s-paths variables.}

\begin{rem}
\label{rem:pathset}
The path set $\Paths$ is not given as an input. For each demand $d$, $\Paths^{d}$ needs to be generated. For general graphs, the cardinality of $\Paths^{d}$ may be exponential in the input size. 
\end{rem}

\subsection{Problem Statement}

The central controller tries to route each demand $d$ via a unique feasible s-path in $\Paths^{d}$. This is indicated with the variable $y_p$ which is set to $1$ if $p$ is chosen for $\dem(p)$, $0$ otherwise. Uniqueness of the s-path is ensured with the constraint 
\begin{align}
\label{eq:dn-unique}
\sum_{p\in\Paths^{d}} y_{p} \leq 1 \quad\forall d\in\Demands.
\end{align}

The arc capacities are shared among the routed demands. No more data than its capacity $\capa_a$ may be sent onto any arc $a$ during any cycle $c$.
This condition is ensured by the constraint
\begin{align}
\label{eq:dn-capa}
\sum_{p \in \Paths: a \in p} \bw^{\dem(p)}_{a,p}(c) \; y_{p} \leq \capa_{a} \quad \forall a\in A, \forall c.
\end{align}
Note that due to the given cyclic structure, it suffices to calculate the bandwidth  in the cycles  $0$ to $C-1$. Note also that cycle shifts $r_i^p>0$ may allow for otherwise incompatible demands to be transmitted via the same arc.

The aim of the central controller is to accept a subset of demands such that the total accepted bandwidth is maximized. The bandwidth $\bw^d$ of demand $d$ is the  sum of the bandwidth transmitted over the cycles $0,\ldots, C-1$, i.e.\ $\bw^d = \sum_{c=0}^{C-1} \bw_c^d$.
Thus, the Deterministic Networking (DN) problem can be formulated as an ILP in the following way: 
\begin{align*}
(\text{DN})& \max && \sum_{p\in\Paths} \bw^{\dem(p)} y_{p}\\
& \text{s.t.} &&  \sum_{p\in\Paths^{d}} y_{p} \leq 1 \quad\forall d,\\
&&& \sum_{p \in \Paths: a \in p} \bw^{\dem(p)}_{a,p}(c)\; y_{p} \leq \capa_{a} \quad \forall a, c,\\
&&& y_{p}\in\lbrace 0,1\rbrace \quad \forall p.
\end{align*}

%\begin{figure}
%\includegraphics[width=\columnwidth]{}
%\caption{"Shifted" and "unshifted path"}
%\label{fig:ex2}
%\end{figure} 

%As an example, consider  Fig.\ \ref{fig:ex2}. Two demands are in a tree-like network. Demand 1 has one feasible s-path available, while demand 2 has an "unshifted path" ($r_1=0$) and a "shifted path" ($r_1=1$). All paths intersect on the arc $(c,d)$. If demand 2 is routed on the "unshifted path", the higher requested bandwidth values coincide on the same cycle ($c=1$), and the total bandwidth in that cycle exceeds  the available capacity ($6>5$). This solution is thus not feasible. However, if demand 2 is routed on the "shifted path", the higher demand values of demand 1 coincide with the lower values of demand 2 and vice versa, allowing both demands to be routed simultaneously.

\subsection{Complexity Analysis}
\label{sec:complexity}

The DN problem is an NP-hard optimization problem. This is due to Theorem 1 that shows NP-completeness for the decision counterpart, called DND.

\begin{theorem}
	\label{theorem:complexity}
	The \emph{DND} problem is NP-complete.
\end{theorem}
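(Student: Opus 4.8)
The plan is to prove NP-completeness of DND by establishing its two defining halves separately: membership in NP and NP-hardness via a reduction from a classical NP-complete problem. First I would pin down the decision version precisely: an instance of DND consists of an instance of DN together with a target value $K$, and the question is whether there exists a selection of s-paths satisfying the uniqueness constraint \eqref{eq:dn-unique} and the arc-cycle capacity constraints \eqref{eq:dn-capa} whose accepted bandwidth $\sum_{p}\bw^{\dem(p)}y_{p}$ is at least $K$.

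To show DND $\in$ NP, I would take as certificate the list of chosen s-paths, one per accepted demand. Although $\Paths^{d}$ may be exponentially large (Remark~\ref{rem:pathset}), each \emph{individual} s-path has a polynomial-size description: its arc sequence has length at most $\card{V}$ and every shift $r_k^p$ lies in $\{0,\dots,R\}$. Given such a certificate, I would check in polynomial time that (i) at most one s-path is selected per demand, (ii) each selected s-path is feasible, i.e.\ a valid $s^{d}$--$t^{d}$ walk whose delay $\Delta(p)$ computed through \eqref{eq:shift} respects $\Delta^{d}$, (iii) the $\card{A}\cdot C$ capacity constraints \eqref{eq:dn-capa} all hold, and (iv) the accepted bandwidth reaches $K$. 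This certifies membership in NP.

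For NP-hardness I would reduce from Partition (equivalently Subset Sum): given positive integers $w_1,\dots,w_n$ with $\sum_i w_i = 2S$, decide whether some index subset sums to exactly $S$. I would build a minimal topology consisting of a single arc $a=(u,v)$, fix the hypercycle length to $C=1$, and set $\capa_a = S$. For each $w_i$ I create a demand $d_i$ from $u$ to $v$ emitting $\bw^{d_i}_0 = w_i$ in the single cycle, with a delay bound $\Delta^{d_i}\ge\Delta_a$ so that the one-arc s-path is feasible; the target is $K=S$. This construction is clearly polynomial. Equivalence then follows because, with a single arc and $C=1$, constraint \eqref{eq:dn-capa} collapses to $\sum_{i\text{ accepted}} w_i \le S$ while the objective equals the same sum, so a feasible selection of value $\ge K=S$ forces the accepted weights to sum to exactly $S$ (a valid partition), and conversely any partition subset yields a feasible selection of value $S$.

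I expect the main difficulty to be conceptual rather than computational: one must argue carefully that the degenerate single-cycle, single-arc instance still satisfies every feasibility clause of Section~\ref{problem}, and that the decision version faithfully mirrors the ILP. A deeper obstacle arises only if one insists that the hardness stem from the \emph{scheduling} freedom rather than pure bandwidth packing; then one would design gadgets in which the cycle shifts $r_k^p$ encode the combinatorial choice (for instance a phase assignment that de-conflicts periodic demand profiles across cycles), and proving the exact two-way correspondence would be where the real work lies. For NP-completeness, however, the single-link Partition reduction already suffices, with the added benefit of showing that the problem is hard even when neither routing nor scheduling freedom is available.
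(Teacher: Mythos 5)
Your proof is correct, but it takes a genuinely different route from the paper's. The paper reduces from the $k$-arc-disjoint-paths problem: it sets $C=1$, creates $k$ demands with $\bw^d_0=1$ all sharing source $s$ and destination $t$, gives every arc capacity $\capa_a=1$, and chooses threshold $\ell=k$, so that reaching the threshold is equivalent to routing all $k$ demands along mutually arc-disjoint paths; membership in NP is asserted in a single line. You instead reduce from Partition on a one-arc network with $C=1$, so in your instance there is no routing or scheduling freedom at all and hardness comes purely from the bandwidth-packing (knapsack) aspect. What each approach buys: a disjoint-paths reduction yields hardness with all bandwidths and capacities equal to $1$, i.e.\ hardness of the combinatorial routing structure itself, independent of the size of the numbers (strong NP-hardness, consistent with the paper's later inapproximability discussion via unsplittable multi-commodity flow); Partition, by contrast, is only weakly NP-complete, so your construction proves hardness only under binary encoding of the $w_i$ --- indeed the paper itself notes that DN with bounded $C$ and polynomially many s-paths, which is exactly your regime, is solvable in pseudo-polynomial time, so your reduction locates the hardness in a strictly weaker sense. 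On the other hand, your argument is airtight, and your NP-membership step is more careful than the paper's: observing that an individual s-path has a polynomial-size description even though $\card{\Paths^d}$ may be exponential is precisely what makes the certificate valid, and the paper glosses over this. Finally, comparing the two reveals a real defect on the paper's side: as literally written, its source problem is arc-disjoint paths between a \emph{single} pair $(s,t)$, which is solvable in polynomial time by max-flow/Menger's theorem; the NP-complete versions of directed disjoint paths require at least two distinct terminal pairs (Fortune--Hopcroft--Wyllie, or Vygen's variant with two sets of parallel demand edges), so the paper's reduction needs repair on this point, whereas your Partition reduction establishes the theorem without any such gap.
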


\begin{proof}
DND decides if, for a given threshold $\ell\in\mathbb{R}_{+}$, there is a feasible solution to DN  with objective value $\geq \ell$. The following reduction proof is based on the well-known $k$-Disjoint Paths ($k$DP)  problem \cite[Theorem 19.7]{KorteVygen2007}. 
We consider the (NP-complete) version of $k$DP  which decides if $k$ arc-disjoint paths can be found between nodes $s$ and $t$ in a directed graph $G$. This problem can be reduced to an instance of DND by setting the number of cycles to $C=1$ and $\bw^d_0 = 1$ for $k$ demands that all have source $s$ and destination $t$. The capacity of every arc $a$ is chosen to be $\capa_a = 1$.  Choosing $\ell=k$,  DND returns true if and only if there are $k$ arc-disjoint paths in $G$. Since all reduction steps are polynomial in the problem size, the NP-hardness proof is complete.
Furthermore, it is  clear that  DND belongs to NP since the validity of any solution can be checked in polynomial time. Thus, DND is NP-complete.
\end{proof}
 
In fact, there are two aspects which induce the "hardness" of DN: the number of cycles $C$ and the routing aspect, i.e.\ the multitude of available paths per demand.

%If $C$ is bounded, however, the classic result of Frieze and Clarke \cite{Frieze1982} provides a PTAS: Even if the graph $G$ is tree-like, the DN problem is a special case of 01MK with a bounded number of constraints. 

\paragraph{Complexity due to routing}
The DN problem  generalizes  the unsplittable Multi-Commodity Flow problem (uMCF, also called Unsplittable Flow problem, see for example \cite{Guruswami2003}) \jeremie{that is at the core of all routing problems through the introduction of cycles and delays. DN is a temporal expansion of uMCF as  transmission cycles need to be decided on each link of scheduled paths. In addition, all commodities must experience a maximum end-to-end delay.} However, in general the  coefficients in objective function and constraints of uMCF are independent and not related as in DN (recall that the objective in DN is maximization of the bandwidth that is also used in the capacity constraint). Guruswami et al \cite{Guruswami2003} show that it is NP-hard to approximate uMCF  within $\card{E}^{1/2-\varepsilon}$ for any $\varepsilon>0$. Their proof, however, can easily be extended  to DN (with its related coefficients) with the same result even in the case of $C=1$.
\paragraph{Complexity due to cycles} If $C$ is part of the input, and not a priori bounded, DN  cannot even be efficiently  approximated in polynomial time (unless P=NP), i.e.\ there is no polynomial-time approximation scheme (PTAS). This is true even if the graph $G$ consists only of one single arc. Then, DN is equivalent  to the 0-1 Multidimensional Knapsack (01MK) problem (see \cite{Kaparis2008}): a 01MK instance  is transformed to a DN instance by multiplying each constraint such that the right-hand side (\textit{rhs}) is the least common multiple of the given \textit{rhs} values. If the number of constraints (given by $C$) is unbounded, there is no PTAS for 01MK \cite{Korte1981}.

Note that DN becomes weakly NP-hard (and thus solvable in pseudo-polynomial time) if the number of cycles $C$ and the set of feasible s-paths $\card{\Paths}$  are bounded (and can be computed in polynomial time) since the same is true for 01MK with bounded dimensions \cite{Korte1981}.

\seb{To conclude this section, we show that the DN is harder than the classical unsplittable Multi-Commodity Flow problem. Indeed, the capacity constraints correspond to a knapsack problem (weakly NP-hard) on each link, whereas, the capacity constraints for DN problem correspond to the  0-1 Multidimensional Knapsack problem (strongly NP-hard) on each link. }

\section{Scalable Global Algorithm}
\label{solution}

\seb{This section presents a solution to DN based on \emph{Column Generation} (CG) and Randomized Rounding (RR), a classic approach for intractable ILPs.
Because the DN model has an exponential number of variables it is not possible to solve it with a linear solver. However, we can use a CG procedure to generate a polynomial sub set of variables ensuring the optimality of the linear relaxation of the DN model (referred to as LDN). We then round the LDN solution to an integer solution using a randomized rounding algorithm which provides a high-quality and feasible solution to the original DN problem.
The optimal solution to the relaxed problem provides a \emph{upper bound} (UB) to DN and it can be used to evaluate the integrality gap. By strengthening the capacity constraints (see Sec.~\ref{inequalities}), we present an enhanced LDN formulation that helps to improve the CG-RR solution as well as the UB.}

\subsection{Solving the Linear Relaxation}
\label{sec:LR}
LDN relaxes the integrality constraints on the variables $y_p$.
It is well-known that linear programs (LPs) such as LDN can  be solved in polynomial time in terms of input size  \cite{Khachiyan1979}. However, as to Remark~\ref{rem:pathset}, the number of variables in DN in general is not polynomial in the input size which poses a problem solving LDN in practice. We overcome this problem by applying column generation~\cite{desaulniers2006column} to LDN.

\subsubsection{Column Generation}
%The two major factors increasing the complexity of DN are the size of $G$ and the number of cycles $C$. Both of these aspects are reflected in the number of s-paths $\card{\Paths}$ and thus the number of variables. This is the case especially when the maximum delays  $t^d$ are high, since low values $t^d$ restrict the number of feasible s-paths. In fact, even the linear relaxation of DN (LRDN) may become intractable in that case.
%Columnn generation works as follows: 
%To solve the so-called \emph{master LP}, 
We start with a \emph{restricted LP} which contains
%all constraints but 
only a subset of the variables of the so called \emph{master LP} LDN. \seb{This subset of variables is given by the greedy algorithm described in the next section.} By solving the \emph{pricing problem}, we decide whether there are variables that are currently not contained in the restricted LP but might improve the objective value. If no such variables can be found, the current subset of variables is guaranteed to be sufficient to solve the master LP optimally. Otherwise, the newly generated variables are added to the restricted LP and the process iterates. This method is based on LP duality (see for example \cite{Schrijver2003}). %In fact, column generation is dual to the cutting plane method that adds constraints instead of variables. 

In the following, we consider a subset of s-paths $\Paths^{\prime}\subseteq\Paths$ \seb{that respect the end-to-end delay}. For ease of notation, we assume that for all $d\in\Demands$, there is an s-path $p\in\Paths^{\prime}$ such that $\dem(p)=d$. The induced restricted relaxation of DN is:
\begin{align}
(\text{LDN}^{\prime})& \max && \sum_{p\in\Paths^{\prime}} \bw^{\dem(p)} y_{p}\nonumber\\
& \text{s.t.} &&  \sum_{p\in\Paths^{\prime}:\dem(p)=d} y_{p} \leq 1 \quad\forall d,\label{eq:ldn-1}\\
&&& \sum_{p\in\Paths^{\prime}: a\in p} \bw^{\dem(p)}_{a,p}(c)\; y_{p} \leq \capa_{a} \quad \forall a, c,\label{eq:ldn-2}\\
&&& y_{p}\geq 0 \quad \forall p\in\Paths^{\prime}.\nonumber
\end{align}
Note that a feasible solution $(y_p^{\prime})$ to $\text{LDN}^{\prime}$ induces a feasible solution $(y_p)$ to LDN by setting $y_p = y_p^{\prime}$ for $p\in\Paths^{\prime}$ and $y_p = 0$ otherwise. 
If $(y_p^{\prime})$ is optimal for $\text{LDN}^{\prime}$, we can determine if the induced solution $(y_p)$ is optimal to LDN by considering the dual of $\text{LDN}^{\prime}$: 
\begin{align*}
(\text{D-LDN}^{\prime})&&& \min \sum_{d} \lambda_d + \sum_{a}\sum_{c}\capa_{a} \mu_{a,c}\\
\text{s.t.}&&&\hspace{-2pt} \lambda_{\dem(p)} + \sum_{a\in p}\sum_{c} \bw^{\dem(p)}_{a,p}(c) \mu_{a,c} \geq \bw^{\dem(p)} \;\forall p\in\Paths^{\prime},\\
&&& \lambda_d \geq 0 \quad \forall d,\\
&&& \mu_{a,c} \geq 0 \quad \forall a,c,
\end{align*}
where the dual variables $\lambda_d$ relate to primal constraints (Eq.~\eqref{eq:ldn-1}) and dual variables $\mu_{a,c}$
%, $a\in A, c\in\lbrace 0, \ldots, C-1\rbrace$, 
relate to constraints (Eq.~\eqref{eq:ldn-2}).

Let $\left(\left(\lambda_d^{\prime\ast}\right),\left( \mu_{a,c}^{\prime\ast}\right)\right)$ be an optimal solution  for $\text{D-LDN}^{\prime}$. If there exists a \emph{separating s-path} $p\in\Paths\setminus\Paths^{\prime}$ such that  
\begin{align}
\label{eq:sep-path}
\lambda_{\dem(p)}^{\prime\ast} + \sum_{a\in p}\sum_{c} \bw^{\dem(p)}_{a,p}(c) \mu_{a,c}^{\prime\ast} < \bw^{\dem(p)},
\end{align}
then the solution is infeasible to D-LDN, the dual of LDN. The problem $\text{D-LDN}^{\prime\prime}$ with $\Paths^{\prime\prime}=\Paths^{\prime}\cup\lbrace p\rbrace$ constitutes an improved approximation to D-LDN. If no such separating s-path exists, the solution is feasible to D-LDN and also optimal for DLN. 
%Then, \eqref{eq:dual2} implies that the s-paths in $\Paths^{\prime}$ constitute a subset of $\Paths$ sufficient for an optimal solution to LDN.

%\textbf{JL: I would remove the following paragraph:} Note that $\text{D-LDN}^{\prime}$ and $\text{D-LDN}$ are feasible and bounded since the zero vector $\mathbf{0}$ is feasible for both LDN and $\text{LDN}^{\prime}$ and that both these problems are bounded by $\sum_d \bw^d < \infty$.
%Thus, the Duality theorem (see \cite{Gale1951})   implies  the following relation between the respective optimal values:
%\begin{align}
%\label{eq:dual2}
%o^{\ast}(\text{D-LDN}^{\prime}) = o^{\ast}(\text{LDN}^{\prime})  \leq o^{\ast}(\text{LDN}) = o^{\ast}(\text{D-LDN}).
%\end{align}

%D-LDN being dual to LDN that contains the constraints for all paths in $\Paths$, it is clear that
%\begin{align}
%\label{eq:dual1}
%o^{\ast}(\text{D-LDN}^{\prime})\leq o^{\ast}(\text{D-LDN})
%\end{align}
%where $o^{\ast}(\cdot)$ is the optimal objective value of a problem.

Note that for LDN, the  latency constraint must be integrated in the pricing problem. To solve the pricing problem, an s-path fulfilling Eq.~\eqref{eq:sep-path} needs to be found if and only if one exists. 

\subsubsection{Generation of Separating s-Paths}

Given an optimal solution $\left(\left(\lambda_d^{\prime\ast}\right),\left( \mu_{a,c}^{\prime\ast}\right)\right)$  to $\text{D-LDN}^{\prime}$, an algorithm generating separating s-paths \seb{ respecting the end-to-end delay} has to determine for each demand $d \in \Demands$ if such separating s-paths  $\Paths^{\prime\prime}\subseteq\Paths\setminus\Paths^{\prime}$ exist. %, they must satisfy constraints~\eqref{eq:sep-path} and end-to-end latency constraints. 
If yes, it should return (a subset of)  $\Paths^{\prime\prime}$, $\emptyset$ otherwise.
%and (ii) if no such path exists, return $\emptyset$. 

For each demand $d \in \Demands$,  finding the (delay constrained) shortest s-path $p$ in terms of path weight $\sum_{a\in p}\sum_{c} \bw^{\dem(p)}_{a,p}(c) \mu_{a,c}^{\prime\ast}$
solves the pricing problem. If solved optimally, it guarantees that a path is found if it exists. If the weight of the shortest path is strictly smaller than $\bw^{\dem(p)} - \lambda_{\dem(p)}^{\prime\ast}$, then we add the column (variable) associated with this path to the problem. If for all demands, no columns can be added, the CG procedure terminates.

\begin{figure}
\centering
\includegraphics[width=0.8\columnwidth]{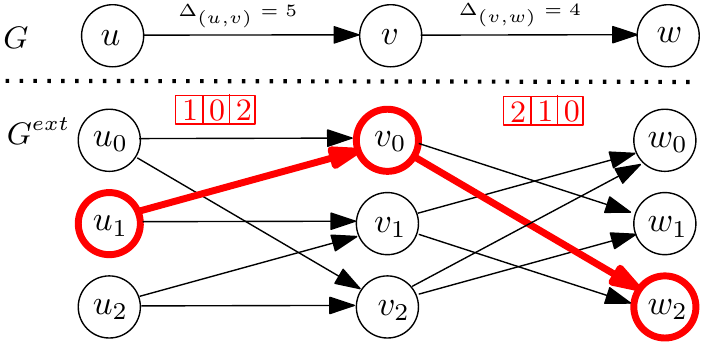}
\caption{Constructing the extended graph  $G^{\text{ext}}$ from graph $G$.}
\label{fig:extended}
\end{figure}

In order to compute a shortest s-path, we construct the extended graph $G^{\text{ext}}=(V^{\text{ext}},A^{\text{ext}})$ where $V^{\text{ext}} = \lbrace u_c \mid (u,c) \in V\times\lbrace 0, \ldots, C-1\rbrace\rbrace$.
When a path $p$ in $G^{\text{ext}}$ contains node $u_c$, the respective s-path  in $G$ passes the following arc $(u,v)$ with a cycle shift of $c$ w.r.t.\ the source $s^{d(p)}$ of the respective demand. The arc set $A^{\text{ext}}$ represents the possible transitions to the following node $v$. E.g.\ if s-path $p$ in $G$ has no additional shift scheduled at $v$, the path in $G^{\text{ext}}$ takes the arc $(u_c, v_{(c+\Delta_{(u,v)})\% C})\in A^{\text{ext}}$. Generally, if there is a scheduled shift of $r$ cycles at $v$, the respective arc is  $(u_c, v_{(c+\Delta_{(u,v)} + r)\% C})\in A^{\text{ext}}$ since the cycle in which $p$ leaves  $v$ is $(c+\Delta_{(u,v)} + r)\% C$.

As an illustration, Fig~\ref{fig:extended} shows three internal nodes $u,v,w$ of some path $p$ for a demand that has demand $2,1,0$ data units over cycles 0, 1 and 2, respectively ($C=3$). A maximum of 1 additional shift per node is allowed. Exiting node $u$, $p$ has a cycle shift of $1$, thus it contains $u_1$ in  $G^{\text{ext}}$. There is no additional shift at $v$, thus the following node in $G^{\text{ext}}$ is $v_{(1+\Delta_{(u,v)})\%3} = v_0$. At the following node there is a shift of one cycle ($r_{w}=1$), thus  $p$ contains $w_{(0+\Delta_{(v,w)}+1)\%3} = w_2$.

This construction allows  setting the arc weights in $G^{\text{ext}}$ independently of the specific path as $w_{(u_{c^{\prime}}, v_{c^{\prime\prime}})} = \sum_c \bw^d_{(c^{\prime} + c)\% C} \mu^{\prime\ast}_{a,c}$. Thus, finding a separating s-path in $G$ is equivalent to finding a simple path in $G^{\text{ext}}$ that respects both the weight and the delay constraint.

In case the end-to-end
 delay constraint is negligible, shortest path algorithms such as Dijkstra's may be applied to find the shortest path (in terms of arc weights) in polynomial time. %If the length of the shortest path is smaller than $\bw^d-\lambda_d^{\prime\ast}$, a separating s-path is found. 
In contrast, finding a  shortest path that also meets the delay constraint is NP-hard. In \cite{larac}, the authors propose a suboptimal but polynomial-time algorithm called LARAC based on the problem's Lagrangian relaxation and Dijkstra's algorithm. This heuristic may however not find any separating s-path even if one exists, making its use prohibitive for solving the pricing problem.

Our algorithm guarantees to find a separating s-path if one exists. Thus, LDN is solved to optimality and we obtain an upper bound to DN.
To efficiently solve this pricing problem, we apply a dynamic programming algorithm (see Algorithm~\ref{alg:s-paths}) that finds a suboptimal separating s-path for every demand $d$ in case it exists and guarantees to return an empty set in case no separating s-path exists. 
The algorithm reduces to a recursive depth-first search (DFS) on the extended graph $G^{\text{ext}}$ which can be in practice generated on the fly. 

For every node $v\in V$, we maintain a label $(w, \Delta)$ that signifies that $v$ has been reached by an s-path with an accumulated weight at most $w$ and latency at most $\Delta$. At any point during the execution of the algorithm we have $L(v_{c}) = L(v_{c^{\prime}})$ for all $c, c^{\prime}$. For every demand $d$, the label sets are initialized by $L(s^d)=\lbrace (0,0)\rbrace$ and $L(v)=\emptyset$ for $v\in V\setminus\lbrace v\rbrace$. 
 Let the current path at the node $u_c$ have a label $(w, \Delta)$. From $u_c$, the algorithm chooses a neighbor $v_{c^{\prime}}\in\delta^{\text{ext}}_{+}(u_c)$. The current path label is updated to $(w^{\prime}, \Delta^{\prime})=(w+w_{(u_c, v_{c^{\prime}})}, \Delta + \Delta_{(u_c, v_{c^{\prime}})})$ where $\Delta_{(u_c, v_{c^{\prime}})}$ is the delay of arc $a^{\prime}$ including the cycle shift. 
The new path is  rejected if the delay is too high, i.e.\ $\Delta + \Delta_{(u_c, v_{c^{\prime}})} > \Delta^d$, or if it is \emph{dominated}, i.e.\ if $L(v^{\prime})$ contains a label $(\bar{w}, \bar{\Delta})$ for which $\bar{w} \leq w^{\prime}$ and  $\bar{\Delta} \leq \Delta^{\prime}$. In this case, the algorithm goes back to $u_c$. Otherwise, the current label is added to $L(v^{\prime})$, and all labels in $L(v^{\prime})$ that are dominated by the current label are deleted.
If  destination node $t^d_c$ (for any $c$) is reached by the algorithm while the current delay $\Delta$ does not surpass the delay limit $\Delta^d $ and the current weight 
$w$ is smaller than $\bw^d-\lambda^{\prime\ast}_d$, add the corresponding path to the return set. 
Finally, return the set of all generated paths.

\begin{algorithm}[t]
\caption{$\texttt{generate-s-Paths}$}
\label{alg:s-paths}
\begin{algorithmic}
		\small
\State $\Paths^{\prime\prime} \coloneqq \emptyset$
\ForEach{$d\in\Demands$} 
%	\FOR{all $u\in V$}
%		\STATE $L(u)\coloneqq \emptyset$	
%		\FOR{all $a\in \delta^{+}(u)$}		
%			\FOR{all $c\in \lbrace 0, \ldots, C-1 \rbrace$}
%				\STATE $w_{a,c} \coloneqq \sum_{c^{\prime}=0}^{C-1}\bw^d_{(c + c^{\prime})\% C} \mu^{\prime\ast}_{a,c^{\prime}}$
% 			\ENDFOR	
%		\ENDFOR	
%	\ENDFOR
	\State $w_{a,c} \coloneqq \sum_{c^\prime=0}^{C-1}\bw^d_{(c + c^{\prime})\% C} \mu^{\prime\ast}_{a,c^{\prime}} \ \ \  \forall a \in A, \forall c\in C $	
	\State $L(s^d)\coloneqq\lbrace(0,0)\rbrace$
	\State $u\coloneqq s^d, c\coloneqq 0, w\coloneqq 0$, $\Delta \coloneqq 0$
	\State $p \coloneqq \texttt{rec-s-Path}(u, c, w, \Delta, d)$
	%\IF{$p$ is feasible}
	\State $\Paths^{\prime\prime} \coloneqq \Paths^{\prime\prime}\cup\lbrace p \rbrace$
	%\ENDIF
\EndForEach
\end{algorithmic}
\end{algorithm}

\begin{algorithm}
\caption{$\texttt{rec-s-Path}( u,c,w, \Delta, d)$}
\label{alg:rec-s-paths}
\begin{algorithmic}
	\small
\ForEach{$v\in\delta^{+}(u)$} \algorithmiccomment{iterate over outgoing arcs}
	\State $w \coloneqq w+ w_{(u,v),c}$ \algorithmiccomment{update weight}
	\State $\Delta \coloneqq \Delta + \Delta_{(u,v)}$ \algorithmiccomment{update delay}
	\If{$v=t^d$} \algorithmiccomment{destination reached}
		\If{$(w, \Delta)$ feasible}
			\State \Return $((u,v), \emptyset)$ \algorithmiccomment{accept arc, done}
		\Else
			\State reject $(u,v)$
		\EndIf
	\Else
		\ForEach{$r\in\lbrace 0, \ldots, R\rbrace$}	\algorithmiccomment{iterate over cycle shifts}
			\State $\Delta \coloneqq \Delta + r$ \algorithmiccomment{include shift}
			\State $c \coloneqq \Delta \% C$ \algorithmiccomment{update cycle}
			\If{$(w, \Delta)$ is feasible and not dominated} 
				\State delete all labels dominated by $(w, \Delta)$
				\State \Return $((u,v),r) + \texttt{rec-s-Path}(v, c, w, \Delta, d)$\\ \algorithmiccomment{accept arc and continue}
			\Else
				\State  reject current path
			\EndIf
		\EndForEach	
	\EndIf
\EndForEach
\end{algorithmic}
\end{algorithm}
Note that in the worst case, Algorithm~\ref{alg:s-paths} terminates after all $s^d$-$t^d$-paths  have been explored. However, in case of a tight delay bound, the algorithm is very fast. If it can be determined that the delay bound for demand $d$ is very permissive, the Algorithm~\ref{alg:s-paths} may be modified by reducing the bound $\Delta^d$ in a first run and, in case no path is found, iteratively increase it until its original value is reached. This procedure may avoid the enumeration of exponentially many paths.

\subsection{Randomized Rounding}

Once the optimal solution $(y_p^{\ast})$  to the linear relaxation LDN has been obtained, a feasible solution $y_p$ to DN is computed by \emph{randomized rounding}. \seb{As the linear relaxation provides a fractional solution where a demand can use several paths, in order to respect the uniqueness constraint for each demand, at most one path, out of those given by linear relaxation, must be selected. } 
% in the following way: Randomly sort demand set $\Demands$. In that order, 
For a demand $d$ picked at random, we assign a probability of $y_p^{\ast}/\sum_{p^{\prime}\in\Paths^d}y_{p^{\prime}}^{\ast}$ to each s-path $p\in\Paths^d$.
% if $\sum_{p\in\Paths^d}\bar{y}^{\text{LDN}}_{p}>0$ and a probability of $0$ otherwise. 
According to these probabilities, we choose a path $p\in\Paths^d$. If there is sufficient residual capacity in the network, we assign the s-path to demand $d$. Otherwise, delete the path, renormalize the remaining probabilities and iterate until an s-path is assigned or no s-path with positive probability remains. Then, we continue with the next demand. This algorithm is executed several times. The best solution, referred to as \emph{CG-RR} solution, is selected.

\subsection{Improving the Fractional Solution}
\label{inequalities}

In order to improve both the upper bound given by the linear relaxation and the \emph{CG-RR} solution, we leverage on the fact that in practice for any demand $d$, the required bandwidth $\bw^d_c$ per cycle $c$  is a multiple of a packet size $\textit{ps}^d$. While the packet sizes may vary among the demands, they are not arbitrarily distributed.
If the arc capacities $\capa_{a}$ are not multiples of the packet sizes, we can produce a fractional solution to DN that is closer to its optimal integer solution and thus improve the CG-RR solution as well as the upper bound by tightening the capacity constraints shown in Eq.~\eqref{eq:dn-capa}.

%\subsubsection{Arc-cycle polytope}

%Consider the polytope $(P)$ given by the projection of the original integer linear program on the link-cycle pair $(a,c)$ defined by the following set of inequalities:
%\begin{alignat}{10}
%\label{ctn1:1}& \sum_{p\in \bar{\Paths}:a\in p} \bw_{a,p}^{d(p)}(c) y_{p} \leq b_a& ~ &\\  
%\label{ctn1:2}& \sum_{p\in \bar{\Paths}^d}y_{p} \leq 1& ~ & d\in \bar{\Demands}\\
%\label{ctn1:3}& y_{p} \geq 0 & ~ & d\in \bar{\Demands}, p\in \bar{\Paths}_d\\
%\label{ctn1:4}& y_{p} \leq 1 & ~ & d\in \bar{\Demands}, p\in \bar{\Paths}_d
%\end{alignat}
%where $\bar{\Demands}$ is the set of demands with at least one path through link $a$ with cycle shift $c$, $\bar{\Paths}^d$ is the respective set of paths for $d$ and $\bar{\Paths} = \bigcup_d \bar{\Paths}^d$.

%Let $P(a,c)$ be the convex hull of the integer points in polytope $(P)$, that is, 
%$$P(a,c)=\text{conv}({y\in \{0,1\}^{\sum_{p\in P_d, d\in \bar{\Demands}}} | \eqref{ctn1:1}-\eqref{ctn1:2}})$$

We denote the greatest common divisor of the bandwidth requirements $\bw^d_{a,p}(c)$ of all paths $p\in\bar{\Paths}^d$, $d\in\bar{\Demands}$ by $\textit{ps}_a$, where $\bar{\Demands}$ is the set of demands with at least one path through link $a$ with cycle shift $c$, $\bar{\Paths}^d$ is the respective set of paths for $d$ and $\bar{\Paths} = \bigcup_d \bar{\Paths}^d$.
We assume that $\textit{ps}_a$ is not a divisor of the arc capacity $\capa_{a}$.
Then %\eqref{eq:dn-capa} 
capacity constraint is strengthened by division by $\textit{ps}_a$ for all $c$:
 \begin{align}
\label{eq:dn-capa2}
\sum_{p\in\bar{\Paths}: a\in p} \frac{\bw^{\dem(p)}_{a,p}(c)}{\textit{ps}_a} \; y_{p} \leq \Bigl\lfloor\frac{\capa_{a}}{\textit{ps}_a}\Bigl\rfloor
\end{align}
This constraint is valid  since the left hand side is integer, and it is stronger than that in Eq.~\eqref{eq:dn-capa} since $\floor{\frac{\capa_{a}}{\textit{ps}_a}} < \frac{\capa_{a}}{\textit{ps}_a}$.

\section{Fast Greedy Algorithm}
\label{online}

Alternatively to the CG-RR solution presented in Sec.~\ref{solution}, a more conventional \emph{Greedy} approach is to route the demands one-by-one. %We call such an algorithm DN-Greedy.
%At any moment, a subset $\bar{\Demands}$ has an s-path assigned (or: the demands in $\bar{\Demands}$ are \emph{routed}). The set of assigned s-paths is $\bar{\Paths}$. 
When a demand $d$ is next in line, the greedy algorithm tries to find a feasible s-path such that, for all affected arcs and all cycles, the capacity constraint is respected. 
We call such an s-path $\bar{\Paths}$-feasible where $\bar{\Paths}$ is the set of already assigned paths. 
If no such s-path can be found, $d$ is rejected, otherwise it is added to $\bar{\Paths}$. 
Such approach encompasses two subproblems: a) paths generation and b) path selection. The order of incoming demands is considered as input, such that the \emph{Greedy} algorithm can also be used in an online setting.

\subsection{Path Generation}
\label{sec:path-gen}
In order to generate $\bar{\Paths}$-feasible s-paths for any demand $d$, one can search for a set of $K$ maximally arc-disjoint paths and hope for a good load balancing \rev{(using the path selection algorithm in Sec.~\ref{pathselect})}. As for the IPRAN scenario described in Sec.~\ref{results} specific knowledge about the network allows to define sets of bottleneck arcs that should be mutually avoided, the algorithm can become more effective. In this context, we first identify sets of \emph{mutually avoidable} arcs $\lbrace a_1, \ldots, a_k\rbrace$ for which an arc can only belong to one s-path for $d$. In our scenario, the set is composed by outgoing and incoming arcs respectively at the source and destination nodes. 
Then, we use a shortest path algorithm with the delay as arc length (e.g. Dijkstra's algorithm) and a maximum delay and we enforce the use of exactly one of these arcs to generate diversified s-paths $\Paths^d$. \rev{In more general scenarios, we may use more general but slower K-maximally edge disjoint algorithms (see chapter 7.6 in~\cite{kleinberg2006algorithm}).}

The runtime of the algorithm depends in large parts on the path generation. Assuming a limitation $K$ on the number of generated paths per demand and an efficient Dijkstra implementation, the runtime is in $O(\card{\Demands}K(\card{A}+\card{V}\log(\card{V}))$.

\subsection{Path Selection}
\label{pathselect}
Given a set paths $\tilde{\Paths}^d$ of $\bar{\Paths}$-feasible s-paths for demand $d$, the simplest approach to path selection is assigning the first (or a random) $p\in\tilde{\Paths}^d$. 
%In a greedy approach, the properties of the subsequent demands are not taken into account for current demand $d$. This fact allows for DN-Greedy  to be used as an online solution to the given problem. 
However, this can lead to very low traffic acceptance as bottleneck links can quickly appear and partition the network.
%a suboptimal global result: 
%For example,  demand $d_1$ may get assigned a path $p$ that saturates some arc $a$. If all feasible s-paths of  demand $d_2$ use $a$, it has to be rejected. If, however, $d_1$ may be routed via a different path not using $a$, $d_2$ may be accepted.
%Assume a set of 3  demands $d_1$, $d_2$, $d_3$ of unit bandwidth requirement during all cycles. The demands are treated in the given order. All feasible s-paths for all  demands pass through one of the two bottleneck arcs $a_1$ and $a_2$, both with capacity $2$. Demand $d_1$ is routed using $a_1$. The first s-path found for $d_2$ also uses $a_1$, blocking it for $d_3$. If then all feasible s-paths for $d_3$ use $a_1$ due to the delay constraint, $d_3$ has to be dropped. However, if $d_2$ is routed via $a_2$ avoiding $a_1$, $d_3$ may be accepted. 
%How can then, however, the avoidable blocking of arcs (as in the above example) be prevented?  
To adress this problem, we use a form of load balancing inspired by competitive online routing algorithms~\cite{Awerbuch1993}.
For two feasible sets of s-paths $\Paths$, $\Paths^{\prime}$ for the same subset of demands, we consider a load balancing metric $\lb$ such that $\lb(\Paths) > \lb(\Paths^{\prime})$ if solution $\Paths$ is more balanced. Given a set of routed demands $\bar{\Paths}$ and demand $d$, \emph{Greedy} selects path  $p\in\tilde{\Paths}^d$ for which $\lb(\bar{\Paths} \cup \lbrace p\rbrace)$ is maximal. 
Based the idea of proportional fairness (see \cite{Kelly1998}), we use a load balancing metric $\lb$ that is maximal when the available bandwidth  on the arcs $A$ is fairly distributed:
\begin{align*}
\lb(\bar{\Paths}) = \sum_{a\in A} \log(\text{av}_a(\bar{\Paths}) + \varepsilon).
\end{align*}
where $\text{av}_a(\bar{\Paths})$ is the percentage of unused bandwidth on arc $a$ when paths $\bar{\Paths}$ are used.
The addition of a small $\varepsilon > 0$ allows for the case in which exhausting the capacity of some arc cannot be avoided. The percentage of unused bandwidth is defined for the busiest cycle, i.e.\
\begin{align*}
\text{av}_a(\bar{\Paths}) = 1 - \max_c \frac{\sum_{ p\in\bar{\Paths}:  a\in p} \bw^{\dem(p)}_{a,p}(c) y_p}{\capa_a}.
\end{align*}
This definition reflects that bandwidth should be kept available for future demands on all cycles in a fair manner.

%\subsection{Algorithm}
%
%DN-Greedy iterates over a given demand sequence $\mathfrak{D}=(d_1, d_2, \ldots )$. For each demand $d$, it generates a set of feasible s-paths $\Paths^{d\prime}$. 
%Among the s-paths that can feasibly added to the current solution, it chooses the one that maximizes the load balancing metric $\lb$. If there is no such s-path exists, $d$ is rejected.
%DN-Greedy  is formalized in Algorithm~\ref{alg:online-lb}.
%\begin{algorithm}
%\caption{$\texttt{DN-Greedy}$}
%\label{alg:online-lb}
%\begin{algorithmic}
%\STATE $\bar{\Paths} \coloneqq \emptyset$
%\FOR{ $d\in\mathfrak{D}$}
%	\STATE Generate $\Paths^{d}$, subset of $\bar{\Paths}$-feasible s-paths
%	\IF{$\Paths^{d} = \emptyset$} 
%		\STATE reject $d$
%	\ELSE 
%		\STATE $\bar{\Paths} \coloneqq \bar{\Paths}\cup \arg\min_{p_\in\Paths^{d}} \lb(\bar{\Paths}\cup p)$
%	\ENDIF
%\ENDFOR
%\end{algorithmic}
%\end{algorithm}

\section{Numerical Evaluation}
\label{results}

This section presents results in a realistic 5G scenario computed with a C++ environment on a 40$\times$3.0 GHz machine with 190GB RAM. Linear programs are solved with IBM CPLEX 12.6.3.

\subsection{Setup}
\label{sec:eval-setup}

We consider a typical IPRAN (IP Radio Access Network) scenario with $1700$ nodes  connected via $5200$ directed arcs. The topology is divided into 3 layers: access, aggregation, and core. Access layer is composed of 1600 nodes, i.e. 800 BS (Base Station) and 800 CSG (Cell Site Gateway). The aggregation layer is composed of 80 nodes referred to as ASG (Aggregation Site Gateway). 
%At the boundaries with the access layer and 20 nodes, referred to as A-ASBRs at the boundaries with the core layer. 
In the core layer there are 20 RSG (Radio Service Gateway) nodes connected to the EPC (Evolved Packet Core).
\rev{We choose this network topology for all evaluation since it is highly relevant for DetNet applications.}

%In the core layer there are 20 C-ASBRs connected via a direct link to the A-ASBRs and 10 SG nodes connected to an EPC.
%, which represents a remote end-point which does not belong to the network.  

The capacity of links in the access and aggregation are 10 Gbps and 40 Gbps, respectively. 
In the core, links have a capacity at either 100 Gbps or 400 Gbps.
%Between A-ASBRs and C-ASBRs links are at 100 Gbps and core %links are at 400 Gbps. 
Each BS has a 1-to-1 mapping with a CSG. Each CSG is connected to a pair of ASG via a direct link. Up to 20 CSG are connected to the same pair of ASG. Groups of ASG are connected via a ring with some additional shortcuts. A group of connected ASG and their CSG form a domain. There are 10 domains in the network. The core network is fully meshed.
% and forms a domain a part. 

The link delay is chosen proportionally to the distance between its nodes: for the access link it is uniformly distributed between 0.2 and 0.8 ms, corresponding to a distance of 10-40 km between elements. In the aggregation, the link delay is uniformly distributed between 0.8 and 1.6 ms, while in the core it is uniformly distributed between 2 and 10 ms. The cycle duration is 10 $\mu s$ and the internal processing delay (worst-case) is 30 $\mu s$ for each node.

We consider 250 to 2500 demands for each scenario. Each demand has a hypercycle $C=12$ and a packet size of 500  Bytes. We consider that the traffic pattern is binary: either there is some traffic sent in a cycle or there is no traffic at all. In case there is some traffic, we consider that either 1 or 2 packets are sent per cycle, that corresponds to a max throughput of 200 Mbps. The same number of packets is sent in every cycle with data transmission. We consider three traffic patterns  randomly selected: one data transmission every 2, every 3, or every 6 cycles. Demands are shifted at the beginning by a random number of cycles. 
\rev{In our main demand scenario (Sc1), 60\% of demands are directed to a BS which is connected to the same pair of ASG, via the associated CSG nodes, of the source node (labeled as $D_1$), 30\% of demands are directed to a BS which is in the same domain of the source node (labeled as $D_2$), and 10\% of demands are directed to a BS in a different domain (labeled as $D_3$). In secondary scenarios, the distributions over $D_1$, $D_2$ and $D_3$ are $100\% / 0\% / 0\%$ (Sc2) and $34\% / 33\% / 33\%$ (Sc3).} The end-to-end delay constraint is using a discrete uniform distribution between 1, 2, and 3 ms for $D_1$ demands, between 4, 5, and 6 ms for $D_2$ demands, and between 40, 50, and 60 ms for $D_3$ demands. 

To eliminate statistical fluctuations, results are obtained by averaging on 10 different traffic realizations.
In the considered scenario, \emph{Greedy} computes for each demand at most $K=4$ disjoint paths when $N_{\rm DN}=2$
% as source and destination nodes are both attached to a pair of ASG nodes, 
and $K=8$ paths when $N_{\rm DN}=3$ to account for each possible time shift at the first CSG.

\subsection{\rev{Benchmark solutions}}
\label{sec:eval-setup}

\rev{We consider that each node is running the CSQF standard with $N_{\rm DN}=3$ queues that can be used for DetNet traffic. As a first benchmark solution, we consider the case in which $N_{\rm DN}=2$ for all nodes, corresponding to a CQF solution in which additional shifts at intermediary nodes are not possible. The two solutions are labeled to as \emph{CSQF} and \emph{CQF} in the plots.}

\rev{In addition, as a comparison point to our algorithms, we considered the case where the exact knowledge about the cycle-specific demand patterns is not available. In this case, a demand is characterized as a flow with a total arrival volume of data over each hypercycle. We still request the solution to uphold the DetNet guarantees on delay and jitter.
Since, in the worst case, all data packets are emitted by the source during the same cycle, the entire capacity needs to be reserved during all cycles on all links of the chosen routing path.
This problem is then a multi-commodity flow problem. We solve it with a column generation method similar to (but simpler than) CG-RR. Its solution is called \emph{NoCycleInfo} in the rest of this section.}
% to account for different time shifts.
%as we consider two sets of $4$ paths, one for each possible time shift.

% We present results for our column generation with randomized rounding (\emph{CG-RR}) described in Section~\ref{solution}, algorithm and our fast online greedy (also called supergreedy, SG) algorithm described in Section~\ref{online}. 
\begin{figure*}[!h]
	\centering
	\begin{subfigure}[b]{0.9\textwidth}
		\centering
		\includegraphics[width=\textwidth]{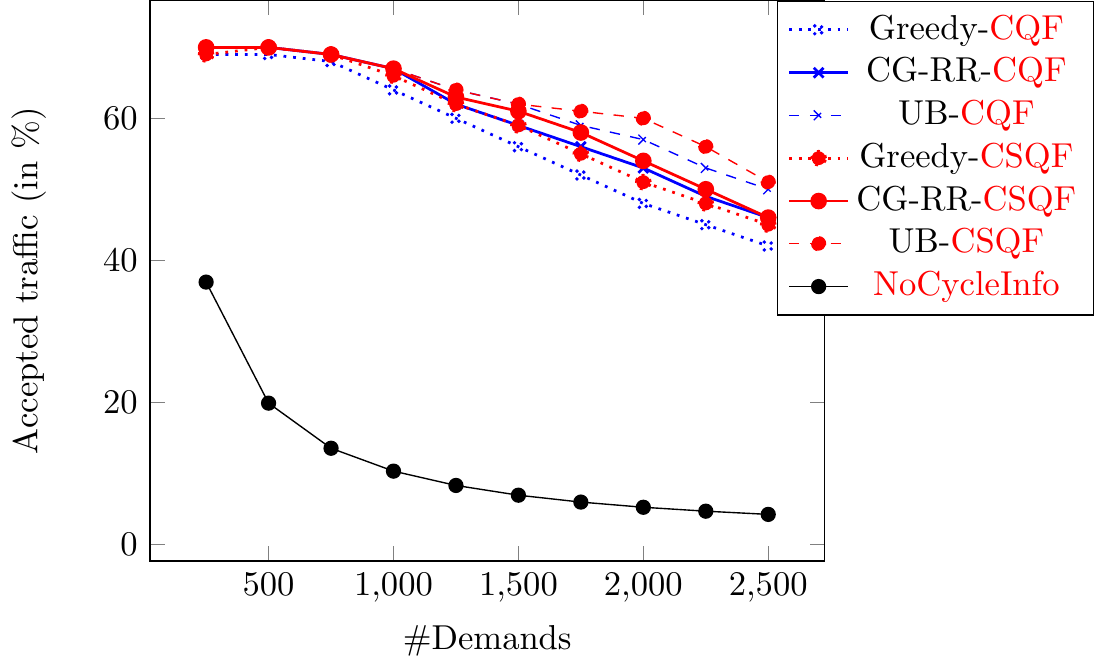}
		\caption{Accepted traffic for \emph{Greedy}, \emph{CG-RR} (with 2 and 3 queues for CQF and CSF, respectively), best Upper Bound (UB).}
		\label{fig:accTraff}
	\end{subfigure}
	\hfill
	\begin{subfigure}[b]{0.75\textwidth}
		\centering
		\includegraphics[width=\textwidth]{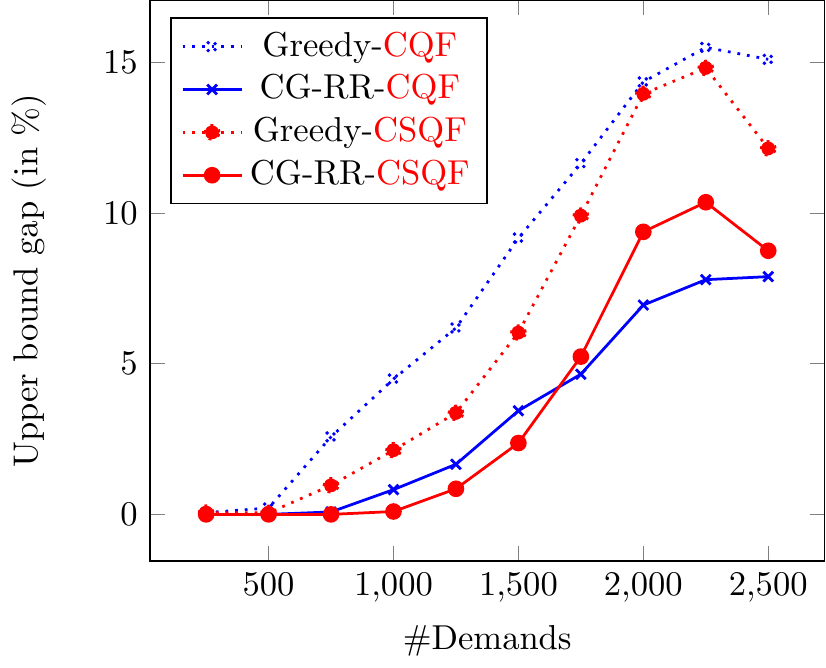}
		\caption{Gap to the best upper bound for \emph{Greedy} and \emph{CG-RR}.}
		\label{fig:gap}
	\end{subfigure}
	\label{fig:results1}
		\caption{Benchmarking results for \emph{CG-RR} and \emph{Greedy} on a realistic IPRAN scenario varying the number of demands and the number of queues (denoted \rev{CQF} and \rev{CSQF}, for 2 and 3 queues, respectively).}
\end{figure*}
\subsection{Results}

%To benchmark \emph{CG-RR} and \emph{Greedy}, Figure~\ref{fig:results} presents the percentage of accepted traffic, the gap to the best Upper Bound (UB) and the execution time. It also shows for \emph{CG-RR} the improvement of the best upper bound found and the solution when considering a reinforced model. In all results, we consider 2 or 3 queues associated to DetNet traffic. 
 
\rev{Figure~\ref{fig:accTraff} shows the accepted traffic (in \%) of the total demands of each instance for the different algorithms, together with the upper bounds of the CQF and the CSQF cases, derived from the respective linear relaxations. We observe first of all as expected that the use of algorithms adapted to CQF/CSQF are superior to the simple reservation of the  worst-case, represented by NoCycleInfo. Second of all, while the percentage of accepted traffic decreases with increasing demands, CSQF shows a superior acceptance rate than CQF. This is due to the flexibility introduced by the scheduling at each intermediate node.}

\rev{Comparing CQF and CSQF in more detail, Figure~\ref{fig:gap} shows the gap to the respective best upper bound coming from the linear relaxation in \emph{CG-RR}, for both \emph{Greedy} and \emph{CG-RR}}. We can see that for small amounts of demands (i.e., less than 500), both \emph{CG-RR} and \emph{Greedy} nearly give the optimal solution. When the traffic increases (around 1000 demands), \emph{CG-RR} still manages to get a solution equal to the upper bound (i.e., an optimal solution). For larger traffic, instead, both solutions plot an increasing gap because the linear relaxation provides an infeasible solution that accepts more traffic by splitting demands over multiple paths and multiple cycles. However, the real optimum lies in between the best upper bound and the integer solution found by \emph{CG-RR}. 
\emph{CG-RR} allows to provide a gap smaller than 10\% for all the considered traffic scenarios, both for \rev{CQF and CSQF}. The gap is slightly better for \rev{CQF} as the results provided by the linear relaxation are closer to the integer solution.
The use of 3 queues in \rev{CSQF (instead of 2 for CQF)} allows to accept more demands as it enables to postpone traffic with non-critical delay constraints. 

%In Figure~\ref{fig:gap}, we show in details the gap to the best upper bound for \emph{CG-RR} and \emph{Greedy}. 
%In practice, as in real network deployment it is important to run efficient algorithms which optimality gap in the order of 10\%, both the algorithms can be used in real network controllers. 
\begin{figure*}[!h]
	\centering
	\begin{subfigure}[b]{0.75\textwidth}
		\centering
		\includegraphics[width=\textwidth]{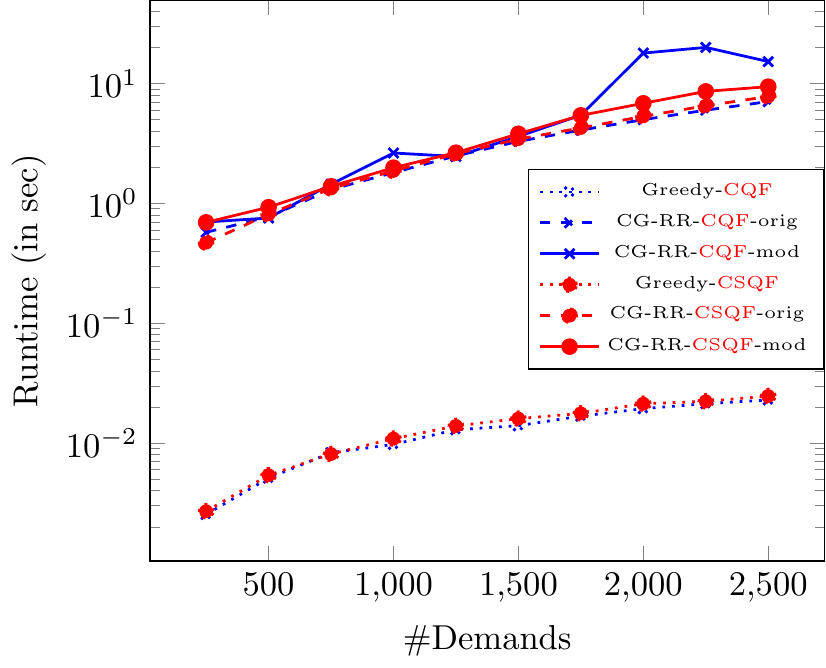}
		\caption{Execution time: \emph{Greedy}, \emph{CG-RR} w.\ original (-orig) and modified (-mod) constraints.}
		\label{fig:runTime}
	\end{subfigure}
	\hfill
	\begin{subfigure}[b]{0.75\textwidth}
		\centering
		\includegraphics[width=\textwidth]{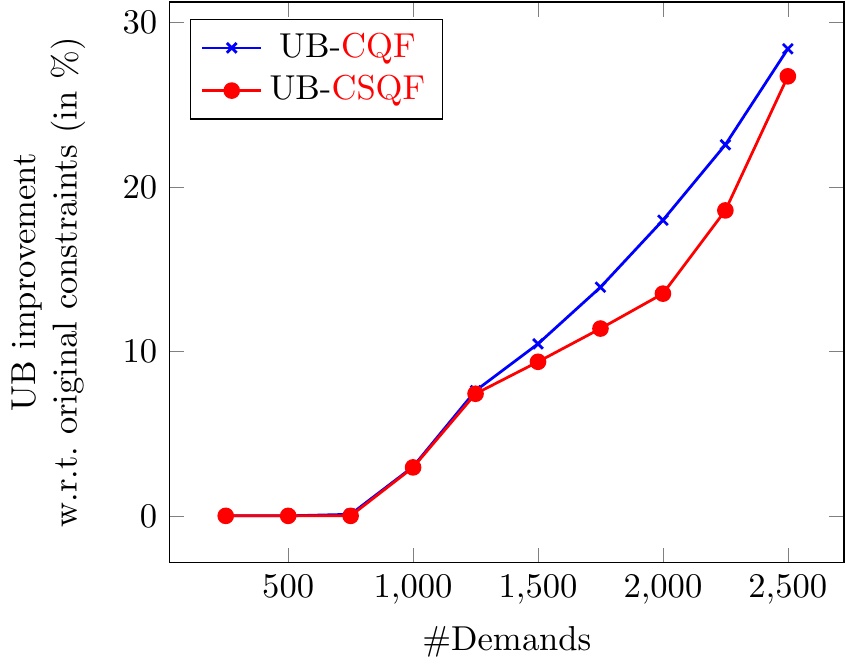}
		\caption{Improvement of the Upper Bound (UB) with modified constraints over original ones.}
		\label{fig:gapImprovement}
	\end{subfigure}
	\caption{Runtime and UB with modified constraints}
	\label{fig:results2}
\end{figure*}

As shown in Figure~\ref{fig:runTime}, \emph{CG-RR} provides a solution within a few seconds, which is quite reasonable for offline network planning.
%, where the running time can be longer than for the online scenario. 
While improving the solution by up to 5\% (see next paragraph), the reinforcement of the model leads to a marginal increase of the execution time by up to 30\% in the case of 3 queues. On the other hand, \emph{Greedy}, which is paying for a larger gap to the best upper bound, can provide a solution to the planning problem within hundreds of microseconds. \rev{To give idea about the memory used during computation, CG-RR needed on average about 500MB RAM to solve instances with 250 demands and 750 MB for 2500 demands.} \rev{As we focused on the rapid resolution of the planning problem, we did not consider arrivals and departures of demands in this evaluation scenario. However, from our results, we believe that the \emph{Greedy} algorithm is very suitable for online and ultra-fast demand acceptance (10 $\mu s$ per demand).}

The fact that we have an upper bound close to the solution provided by \emph{CG-RR} and \emph{Greedy} mainly depends on the reinforcement of constraints presented in Sec.~\ref{inequalities}. Figure~\ref{fig:gapImprovement} shows that for low traffic there is no significant improvement as the optimal solution is already provided. However, for larger traffic scenarios the improvement can be up to 30\% as the linear relaxation is closer to the integer solution. 
%In a symmetrical way to the improvement of the quality of the CG-RR solution after the rounding, 
The reinforcement of constraints allows to produce a solution for the linear relaxation that is closer to the integer optimum. As the use of 2 queues reduces the possibility of splitting traffic, a better upper bound is found. 

\begin{figure*}[!h]
	\centering
	\begin{subfigure}[b]{0.75\textwidth}
		\centering
		\includegraphics[width=\textwidth]{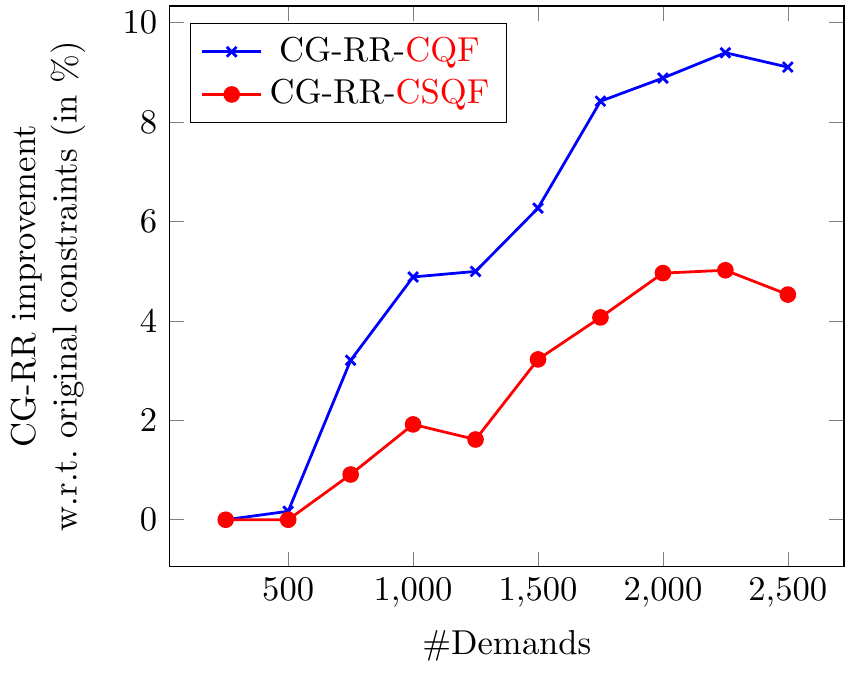}
		\caption{Improvement of \emph{CG-RR} solutions with modified constraints over original ones.}
		\label{fig:improvement}
	\end{subfigure}
	\hfill
	\begin{subfigure}[b]{0.75\textwidth}
		\centering
		\includegraphics[width=\textwidth]{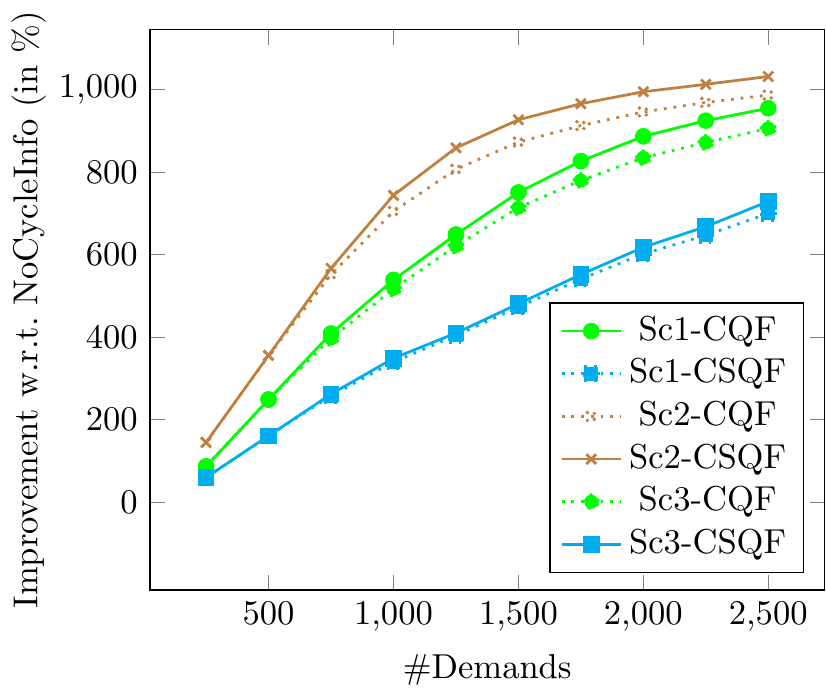}
		\caption{Performance of algorithms on different demand scenarios}
		\label{fig:scenarios}
	\end{subfigure}
\caption{CG-RR with modified constraints and different demand scenarios.}
	\label{fig:results3}
\end{figure*}

Figure~\ref{fig:improvement}  presents the improvement of \emph{CG-RR} with modified constraints over the case with original constraints in terms of accepted traffic. We can see that the improvement is up to 10\% \rev{for CQF} and up to 5\% for \rev{CSQF}. The improvement comes from the fact that the reinforcement model gets a linear relaxation closer to the integer solution.
As before, the smaller improvement for CSQF is due to the split of traffic in the linear relaxation over multiple paths and multiple cycles. 
%However, as in the case with $N_{\rm DN}=2$ only one queue is active for transmission, the traffic is split only over multiple paths, resulting in a solution closer to the integer solution. 
% and reduce the gap with \emph{CG-RR} . 

\rev{Finally, we show in Fig.~\ref{fig:scenarios} the results of our CG-RR algorithm on the different load scenarios described in Sec.~\ref{sec:eval-setup}. For each scenario, we compare the traffic acceptance rate improvement (in \%) over the basic NoCycleInfo algorithm both for CQF and CSQF when the number of demands increases. We observe that the superiority of CQF/CSQF-adapted algorithms holds over a the variety of scenarios, particularly for larger instances. We also observe that the additional scheduling opportunities provided by CSQF always improve traffic acceptance compared to CQF.}

\section{Conclusion}
\label{conclusion}

In this paper we presented two algorithms for the joint routing and scheduling problem of time-triggered flows in large scale deterministic networks using CSQF. We formulated the problem as an extension of a multi-commodity flow problem and analyzed its NP-hardness. We proposed an effective solution based on column generation and dynamic programming. Thanks to the reinforcement of the model with valid inequalities, we improved the upper bound and the solution. On realistic IPRAN instances, we demonstrated that we reach an optimality gap smaller than 10\% in a few seconds. Finally, we also derived an ultra-fast adaptive greedy algorithm (10 $\mu s$ per demand) that can be used online flow admission at the cost of an extra 5\% gap when compared to our advanced solution based on column generation.

\rev{Future work along these lines may include the development of an approximation algorithm with guarantees on the integrality gap for the rounding phase. It may also include the development of an online algorithm based on recent primal dual methods~\cite{buchbinder2009online} to guarantee a certain competitive ratio.}

\bibliographystyle{IEEEtran}
\balance
\bibliography{biblio}

\end{document}